\newtheorem{theorem}{Theorem}
\newtheorem{lemma}{Lemma}
\newtheorem{claim}{Claim}
\newcommand{\IOB}{\textsc{$k$-IOB}}
\newcommand{\rIOB}{\textsc{R-$k$-IOB}}
\newcommand{\LOB}{\textsc{$k$-LOB}}
\newcommand{\rLOB}{\textsc{R-$k$-LOB}}
\newcommand{\ve}{\varepsilon}
\newcommand{\cO}{\mathcal{O}}
\DeclareMathOperator{\operatorClassNP}{NP}
\newcommand{\classNP}{\ensuremath{\operatorClassNP}}
\newcommand{\tw}{{\mathbf{tw}}}
\title{Beyond  Bidimensionality: Parameterized Subexponential  Algorithms on Directed Graphs}
\author{Frederic Dorn\thanks{Department of Informatics,
 University of Bergen, Norway.\newline 
$~$\hspace{.5cm}{\tt {\{dorn|fedor.fomin|daniello|saket.saurabh\}@ii.uib.no}}.}\addtocounter{footnote}{-1}
 \and Fedor V. Fomin\footnotemark 
\and \addtocounter{footnote}{-1}Daniel Lokshtanov \footnotemark 
\and Venkatesh Raman\thanks{The Institute of Mathematical Sciences, Chennai, India. \newline
$~$\hspace{.5cm}{\tt vraman@imsc.res.in}}\addtocounter{footnote}{-2} 
\and  Saket Saurabh\footnotemark
}
\begin{document}

\date{}
\maketitle

\begin{abstract}
In 2000 Alber et al.~[{\em SWAT 2000}\,] obtained the first parameterized subexponential algorithm on undirected planar graphs by showing that $k$-{\sc Dominating Set} is solvable in time $2^{\cO(\sqrt{k})} n^{\cO(1)}$, where $n$ is the input size. This result triggered an extensive study of parameterized problems on planar and more general classes of sparse graphs and culminated in  the creation of 
 Bidimensionality Theory by Demaine et al.~[{\em J. ACM 2005}\,]. The theory utilizes deep theorems from Graph Minor Theory of Robertson and Seymour, and provides a simple criteria for checking whether a parameterized problem is solvable in subexponential time on sparse graphs. 

While bidimensionality theory is an algorithmic framework  on undirected graphs, it remains unclear how to apply it to problems on directed graphs.  The main reason is that Graph Minor Theory for directed graphs is still in a nascent stage and there are no suitable obstruction theorems so far.  Even the analogue  of treewidth for directed graphs is not unique and several alternative definitions have been proposed. 

In this paper we make the first step beyond bidimensionality by obtaining subexponential time algorithms for problems on directed graphs. 
We 
develop two different methods to achieve subexponential time parameterized algorithms for problems on sparse directed graphs. 
 We exemplify our approaches with two well studied problems.
  For the first problem,   {\sc $k$-Leaf Out-Branching}, which is to find an oriented spanning tree with at least $k$ leaves,  we obtain an algorithm solving the problem in time
   $2^{\cO(\sqrt{k} \log k)} n+ n^{\cO(1)}$ on directed graphs whose underlying  undirected graph excludes some fixed graph $H$ as a minor.  For the special case when the input directed graph is planar, the running time can be improved to  $2^{\cO(\sqrt{k} )}n + n^{\cO(1)}$.
   The  second example is 
a generalization of the {\sc Directed Hamiltonian Path} problem, namely 
   {\sc $k$-Internal Out-Branching}, which is to find an oriented  spanning tree with at least  $k$ internal vertices. We obtain an algorithm solving the problem in time $2^{\cO(\sqrt{k} \log k)} + n^{\cO(1)}$ on directed graphs whose underlying  undirected graph excludes some fixed apex graph $H$ as a minor. 
   Finally, we observe that   for any $\ve>0$, the  {\sc $k$-Directed Path} problem is solvable in time 
  $\cO((1+\ve)^k n^{f(\ve)})$, where  $f$ is some function of $\ve$.
   
Our methods are based on non-trivial combinations of obstruction theorems for undirected graphs, kernelization, problem specific combinatorial structures 
and a layering technique similar to the one employed by Baker to obtain PTAS for planar graphs.   
\end{abstract}
%\newpage
%\setcounter{page}{1}
\section{Introduction}
\label{intro}
%Historic perspective. 
%Subexponential algorithms are uncommon. Moreover, there are strong evidences like ETH that there is no such algorithms at all for many problems. 
%However, on specific classes of graphs, like planar, the situation is different. 
%The first paper of Alber. And then reference to many papers in the area. 
%Concluding by Bididmensionality. 
%Short overview of bidimensionality. 
%Why bidimensionality does not work on directed graphs. 
%Our contribution with a proof that the problems are well studied. 
%Historic perspective. 
%In parameterized complex

Parameterized complexity theory is a framework for a refined analysis of hard (\classNP{}-hard) problems.  
%In parameterized complexity 
Here, 
every input instance $I$ of a problem $\Pi$ is accompanied with an integer parameter $k$ and $\Pi$ is said to be fixed parameter tractable (FPT) if there is an algorithm 
running in time $f(k)\cdot n^{\cO(1)}$, where $n=|I|$ and $f$ is a computable function. A central problem in parameterized algorithms is to obtain algorithms with running time 
$f(k)\cdot n^{\cO(1)}$ such that $f$ is as slow growing function as possible. This has led to the development of various graph algorithms with running time $2^{\cO(k)}n^{O(1)}$--- 
notable ones include {\sc $k$-Feedback Vertex Set}~\cite{ChenFLLV08}, {\sc $k$-Leaf Spanning Tree}~\cite{KLR08}, {\sc $k$-Odd Cycle Transversal}~\cite{ReedSV04}, {\sc $k$-Path}~\cite{AlonYZ95}, and  
{\sc $k$-Vertex Cover}~\cite{ChenKX06} in undirected graphs. A natural question was whether we can get {\em subexponential time} algorithms for these problems, that is, can we have algorithms with 
running time $2^{o(k)}n^{\cO(1)}$. It is now possible to show that these problems do not admit algorithms with running time $2^{o(k)}n^{\cO(1)}$ unless the  
%Exponential Time Hypothesis (ETH)~\cite{CaiJ03,ChenHKX06,FlumGrohebook,ImpagliazzoPZ01} fails.  
 Exponential Time Hypothesis (ETH)~\cite{FlumGrohebook,ImpagliazzoPZ01} fails.  
%
%\rem{More citations here} 
%Graph 
Finding algorithms with subexponential running time on general undirected graphs is a trait uncommon to parameterized algorithms.

However, the situation changes completely when we consider problems on topological graph classes like planar graphs or graphs of bounded genus.  
%on restricted graph classes, in particular, on the class of planar graphs, the situation is different. 
In $2000$, Alber et al.~\cite{AlberBFKN02} obtained the first parameterized subexponential algorithm on undirected planar graphs by 
showing that $k$-{\sc Dominating Set} is solvable in time $2^{\cO(\sqrt{k})} n^{\cO(1)}$. This result triggered an extensive study of parameterized problems on planar and more general classes of 
sparse graphs like graphs of bounded genus, apex minor-free graphs and $H$-minor free graphs. All this work led to subexponential time algorithms for several fundamental 
problems like {\sc $k$-Feedback Vertex Set}, {\sc $k$-Edge Dominating Set}, {\sc $k$-Leaf Spanning Tree}, {\sc $k$-Path}, $k$-{\sc $r$-Dominating Set}, {\sc $k$-Vertex Cover} to name a 
few on planar graphs 
\cite{AlberBFKN02,DemaineFHT05talg,FominT06},
%, bounded genus graphs \cite{DemaineFHT05,FominThilIcalp04},  
%FF: Keep longer list for journal version
%\cite{AlberBFKN02,AlberFFFNRS01,AlberFN04,ChenFKX05,DemaineFHT05talg,FernauJ04,FominT06,GutinKLY05,KanjPer02}, bounded genus graphs \cite{DemaineFHT05,FominThilIcalp04},  
%graphs excluding some single-crossing graph as a minor \cite{DemaineHT05}, apex minor-free graphs \cite{DemaineFHT05sidma} 
 and more generally, on $H$-minor-free graphs 
\cite{DemaineFHT05,DemaineH08,DemaineH07-II}. These algorithms are obtained by showing a combinatorial relation between the parameter and the structure of the input graph and proofs require 
strong graph theoretic arguments. This graph-theoretic and combinatorial component in the design of subexponential time parameterized algorithms makes it of an independent interest.  

Demaine et al.~\cite{DemaineFHT05} abstracted out the ``common theme'' among the parameterized subexponential  time 
algorithms on sparse graphs and created the meta-algorithmic theory of Bidimensionality. The bidimensionality 
theory unifies and improves almost all known previous subexponential algorithms on spare graphs. The theory 
is based on algorithmic and combinatorial extensions to various  parts of Graph Minors Theory of Robertson and 
Seymour~\cite{RobertsonST94} 
and provides a simple criteria for checking whether a parameterized problem is solvable in subexponential time 
on sparse graphs. The theory applies to graph problems that are {\em bidimensional} in the sense that the 
value of the solution for the problem in question on $k\times k$ grid or ``grid like graph'' is  at least $\Omega(k^2)$ 
and the value of solution decreases while contracting or sometime deleting the edges. Problems that are bidimensional include {\sc $k$-Feedback Vertex Set}, {\sc $k$-Edge Dominating Set}, {\sc $k$-Leaf Spanning Tree}, {\sc $k$-Path}, $k$-{\sc $r$-Dominating Set}, {\sc $k$-Vertex Cover} and many others. In most cases 
we obtain subexponential time algorithms for a problem using  bidimensionality theory in following steps.  
Given an instance $(G,k)$ to a bidimensional problem $\Pi$, in polynomial time we either decide that it is an 
yes instance to $\Pi$ or the treewidth of $G$ is $\cO(\sqrt k)$. In the second case, using known constant factor 
approximation algorithm for the treewidth, we find a tree decomposition of width $\cO(\sqrt k)$ for $G$ and then 
solve the problem by doing dynamic programming over the obtained tree decomposition. This approach combined 
with Catalan structure based dynamic programming over graphs of bounded treewidth has led to 
$2^{\cO(\sqrt{k})} n^{\cO(1)}$ time algorithm for {\sc $k$-Feedback Vertex Set}, {\sc $k$-Edge Dominating Set}, {\sc $k$-Leaf Spanning Tree}, {\sc $k$-Path}, $k$-{\sc $r$-Dominating Set}, {\sc $k$-Vertex Cover} and many others 
on planar graphs~\cite{DemaineFHT05talg,DemaineFHT05,DornPBF09} and in some cases like {\sc $k$-Dominating Set} and {\sc $k$-Path} on $H$-minor free graphs~\cite{DemaineFHT05,DornFT08}. We refer to surveys by Demaine and Hajiaghayi~\cite{DemaineH08} and Dorn et al.~\cite{DornFT08-II} for further details 
on bidimensionality and subexponential parameterized algorithms.

While bidimensionality theory is a powerful algorithmic framework  on undirected graphs, it remains unclear how to apply it to problems on directed graphs (or digraphs).  The main reason is that Graph Minor Theory for digraphs is still in a nascent stage and there are no suitable obstruction theorems so far.  For an example, even the first step of the framework does not work easily on digraphs,  as there is no unique notion of 
directed $k\times k$ grid. Given a $k\times k$ undirected grid we can make $2^{\cO(k^2)}$ distinct 
directed grids by choosing orientations for the edges. Hence, unless we can 
guarantee a lower bound of $\Omega(k^2)$ on the size of solution of a problem for {\em any}  
directed  $k\times k$ grid, the bidimensionality theory does not look applicable for problems on digraphs.  Even the analogue  of treewidth for digraphs is not unique and several alternative definitions have been proposed. 
%{\bf There have been some attempts to refine the grid minor relation for planar digraphs (\cite{JohnsonRST01}), such that the so-called {\em butterfly} grid minor inherits the arc directions of the digraph. But those grids are way too big to obtain sublinear bounds for the treewidth.}
%\rem{Do we need this?} 
%The first step of the framework does not work easily on digraphs, as
%we can not
%guarantee a lower bound on the size of solution on $r\times r$ grid to be $\Omega(r^2)$. The 
%reason for this anomaly is that most directed graph problems are not minor-closed. 
%That is, the solution size may vary arbitrarily in an input graph and its minor. 
% the basic objective of showing a large size solution on a grid is to show a non-existence of 
%large grids as a minor in the possible direction one can have on the grids. 
Only recently the first non-trivial subexponential parameterized algorithms on digraphs was 
obtained.  Alon et al.~\cite{AlonLS09} introduced the method of chromatic coding, a variant of color coding~\cite{AlonYZ95}, 
and combined it with divide and conquer to obtain $2^{\cO(\sqrt k \log k)}n^{\cO(1)}$ for {\sc $k$-Feedback Arc Set} 
in tournaments.

\medskip\noindent{\textbf{Our contribution.}} In this paper we make the first step beyond bidimensionality by obtaining subexponential time algorithms for problems on sparse digraphs. We 
develop two different methods to achieve subexponential time parameterized algorithms for digraph problems when the input graph can be embedded on some surface or the underlying undirected graph 
excludes some fixed graph $H$ as a minor. 
%From now onwards, unless explicitly stated, we assume 
%that all the directed graphs we consider in this paper has the property that the 
%underlying undirected graph excludes some fixed graph $H$ as a minor. 

% We exemplify our approaches with two well studied problems.
%In this paper we set off research towards an adequate theory on directed graphs \emph{beyond} Bidimensionality. % on directed graphs.
%We develop two methods to obtain parameterized subexponential time algorithms for directed graph problems when 
%the input graph can be embedded on some surface or the underlying undirected graph excludes some fixed 
%graph $H$ as a minor. 
 
\noindent{\em \bf Quasi-bidimensionality.} 
Our first technique can be thought of as ``bidimensionality in disguise''. We observe that 
given a digraph $D$, whose underlying undirected graph $UG(D)$ 
excludes some fixed graph $H$ as a minor, if we can 
remove $o(k^2)$ vertices from the given digraph to obtain a digraph 
whose underlying undirected graph has a constant treewidth, then the treewidth of $UG(D)$ is 
$o(k)$.   So given an instance $(D,k)$ to a problem $\Pi$, in polynomial time we either decide that it is an 
yes instance to $\Pi$ or the treewidth of $UG(D)$ is $o(k)$. 
%That is, from an input directed graph, we obtain in polynomial time either an affirmative answer to our problem or %a directed graph whose underlying undirected graph fulfills the previous conditions for our 
%ndirectbidimensionality. 
%Once we have $o(k)$ bound on the treewidth of the $UG(D)$ 
In the second case, as in the framework based on bidimensionality, we solve the problem  
by doing dynamic programming over the tree decomposition of $UG(D)$. The 
dynamic programming part of the framework is problem-specific and runs in time  $2^{o(k)}  + n^{\cO(1)}$. 
We exemplify this technique on a well studied problem of {\sc $k$-Leaf Out-Branching}. 

We say that a subdigraph $T$ on vertex set $V(T)$ of a digraph $D$ on vertex set $V(D)$ 
is an {\em out-tree} if $T$ is an
oriented tree with only one vertex $r$ of in-degree zero (called
the {\em root}). The vertices of $T$ of out-degree zero are called {\em
leaves} and every other vertex is called an {\em internal vertex}. If $T$ is a spanning out-tree, that is, $V(T)=V(D)$, then
$T$ is called an {\em out-branching} of $D$. Now we are in position to define the problem formally. 
\begin{quote}
{\sc $k$-Leaf Out-Branching (\LOB):} Given a digraph $D$ with the vertex set $V(D)$ and the arc set $A(D)$ 
and a positive integer $k$, check whether there exists an out-branching with at least $k$ leaves.   
\end{quote}

The study of {\sc $k$-Leaf Out-Branching} has been at forefront of research in parameterized algorithms in the 
last few years. Alon et al.  %\cite{AlonFGKS07,AlonFGKS_sidma09} 
\cite{AlonFGKS_sidma09}  showed that the problem is
 fixed parameter tractable by giving  an algorithm that 
 decides in time $\cO(f(k)n)$ whether a strongly connected digraph has an out-branching
with at least $k$ leaves. Bonsma and Dorn \cite{BonsmaD09} extended this result to all 
digraphs, and improved  the running time of the algorithm. Recently, Kneis et al.~\cite{KLR08} provided a parameterized algorithm solving
the problem in time $4^kn^{\cO(1)}$. This result was further improved to $3.72^{k}n^{\cO(1)}$ 
by Daligaut et al.~\cite{abs-0810-4946}. Fernau et al.~\cite{FernauFLRSV09} showed that for the rooted version of the problem,  
where apart from the input instance we are also given a root $r$ and one asks for a $k$-leaf out-branching rooted at 
$r$, admits a $\cO(k^3)$ kernel. Furthermore they also show that \LOB{} does not admit polynomial kernel 
unless polynomial hierarchy collapses to third level. Finally, Daligault and Thomass{\'e}~\cite{abs-0904-2658} obtained a $\cO(k^2)$ kernel for the rooted version of the \LOB{} problem and gave a constant factor 
approximation algorithm for \LOB{}.

Using our new  technique in combination with kernelization result of~\cite{FernauFLRSV09}, we get 
an algorithm for \LOB{} that runs in time  $2^{\cO(\sqrt k \log k)} n + n^{\cO(1)}$ for digraphs 
whose underlying undirected graph is  $H$-minor-free. For planar digraphs our algorithm runs in 
$2^{\cO(\sqrt k )} n + n^{\cO(1)}$ time. 

\smallskip 

\noindent{\em \bf Kernelization and Divide \& Conquer.}
Our second technique is a combination of divide and conquer, kernelization and dynamic 
programming over graphs  of bounded treewidth. 
%Here, we attack the problem in a similar way as~\cite{AlonLS09} do, only that Color coding is  
Here, using a combination of kernelization and a Baker style layering technique for obtaining polynomial time approximation schemes~\cite{Baker94}, we reduce the instance of a given problem to 
$2^{o(k)}n^{\cO(1)}$ many new instances of the same problem. These new instances have the following properties: (a) the treewidth of the underlying undirected graph of these instances is bounded by $o(k)$; and (b) the original input is an yes instance if and only if at least one of the newly generated instance is. We exhibit this technique on the 
{\sc $k$-Internal Out-Branching} problem, a parameterized version of a generalization of {\sc Directed Hamiltonian Path}. 
%which is defined as follows.

\begin{quote}
{\sc $k$-Internal Out-Branching (\IOB{}):} Given a digraph $D$ with the vertex set $V(D)$ and the arc set $A(D)$ and a positive 
integer $k$, check whether there exists an out-branching with at least $k$ internal vertices.   
\end{quote}
Prieto and Sloper~\cite{PrietoS05} studied the {\em undirected} version of this problem and gave an algorithm with running  time $2^{4k\log{k}}n^{\cO(1)}$ and obtained a kernel of size $\cO(k^2)$. Recently, Fomin et al.~\cite{FominGST09} 
obtained a vertex kernel of size $3k$ and gave an algorithm for the undirected version of \IOB{} running in time 
$8^kn^{\cO(1)}$. Gutin et al. \cite{abs-0801-1979} obtained an algorithm of running time 
$2^{\cO(k\log k)}n^{\cO(1)}$ for \IOB{} and gave a kernel of size of $\cO(k^2)$ using the well known method of crown-decomposition. Cohen et al.~\cite{CohenFGKSY09} improved the algorithm for \IOB{} and gave an algorithm with 
running time $49.4^k n^{\cO(1)}$. Here, we  obtain a subexponential time algorithm for \IOB{} 
%{\sc $k$-Internal Out-Branching Problem} 
with running time $2^{\cO(\sqrt k \log k)}  + n^{\cO(1)}$ on directed planar graphs and digraphs whose underlying undirected graphs are apex minor-free. 

Finally, we also observe that for any $\ve>0$, there is an algorithm finding in time   $\cO((1+\ve)^k n^{f(\ve)})$  a directed path of length at least $k$ (the  {\sc $k$-Directed Path} problem) in a digraph which underlying undirected graph excludes a fixed apex graph as a minor. The existence of subexponential parameterized algorithm for this problem remains open.

\section{Preliminaries}
\label{prelim} 
Let $D$ be a digraph. By $V(D)$ and $A(D)$ we
represent the vertex set and arc set of $D$, respectively. 
%An {\em oriented graph} is a digraph with no directed 2-cycle. 
Given a subset $V'\subseteq V(D)$ of a digraph $D$, let $D[V']$ denote the
digraph induced by $V'$. The {\em underlying graph} $UG(D)$ of $D$
is obtained from $D$ by omitting all orientations of arcs and by
deleting one edge from each resulting pair of parallel edges. 
%The{\em connectivity components} of $D$ are the subdigraphs of $D$
%induced by the vertices of components of UG($D$). A digraph $D$ is
%{\em strongly connected} if, for every pair $x,y$ of vertices there
%are directed paths from $x$ to $y$ and from $y$ to $x.$ A maximal
%strongly connected subdigraph of $D$ is called a {\em strong
%component}. 
A vertex $u$ of $D$ is an {\em in-neighbor} ({\em
out-neighbor}) of a vertex $v$ if $uv\in A(D)$ ($vu\in A(D)$,
respectively). The {\em in-degree} $d^-(v)$ ({\em out-degree}
$d^+(v)$) of a vertex $v$ is the number of its in-neighbors
(out-neighbors). We say that a subdigraph $T$ of a digraph $D$ 
is an {\em out-tree} if $T$ is an oriented tree with only one vertex 
$r$ of in-degree zero (called the {\em root}). The vertices of $T$ of out-degree zero are called 
{\em leaves} and every other vertex is called an {\em internal vertex}. If $T$ is a spanning out-tree, that is, $V(T)=V(D)$, then
$T$ is called an {\em out-branching} of $D$.  An out-branching (respectively. out-tree) rooted at $r$ is called {\em $r$-out-branching} 
(respectively. $r$-out-tree). We define the operation of a {\em contraction of a directed arc} as follows. 
An arc $uv$ is contracted as follows: add a new vertex $u'$, and for each arc $wv$ or $wu$ add
the arc $wu'$ and for an arc $vw$ or $uw$ add the arc $u'w$,
remove all arcs incident to $u$ and $v$ and the vertices $u$ and $v$. We  call a loopless digraph $D$ {\em rooted}, if there exists a pre-specified vertex $r$ of in-degree $0$ as a root $r$ and $d^+(r)\geq 2$.  
The rooted digraph $D$ is called {\em connected} if every vertex in $V(D)$ is reachable from $r$ by a directed path.

%We denote by $\ell(D)$ the maximum number of leaves in an out-tree
%of a digraph $D$ and by $\ell_s(D)$ we denote the maximum possible
%number of leaves in an out-branching of a digraph $D$. When $D$ has
%no out-branching, we write $\ell_s(D)=0$. The following simple
%result gives necessary and sufficient conditions for a digraph to
%have an out-branching. This assertion allows us to check whether
%$\ell_s(D)>0$ in time $\cO(|V(D)|+|A(D)|)$.

%\begin{proposition}[\cite{BangJG00}]\label{iffoutb}
%A digraph $D$ has an out-branching if and only if $D$ has a unique
%strong component with no incoming arcs.
%\end{proposition}

Let $G$ be an undirected  graph  with the vertex set $V(G)$ and the edge set $E(G)$. 
%where $V$ (or $V(G)$) is the
%set of vertices and $E$ (or $E(G)$) is the set of edges. We denote
%the number of vertices by $n$ and number of edges by $m$.
For a subset $V'\subseteq V(G)$,
by $G[V']$ we mean the subgraph of $G$ induced by $V'$. By $N(u)$
we denote (open) neighborhood of $u$ that is the set of all vertices
adjacent to $u$ and by $N[u]=N(u)\cup \{u\}$.
Similarly, for a subset $D \subseteq V$, we define $N[D]=\cup_{v\in D} N[v]$. 
%The {\it distance} $d_G(u,v)$ between two vertices $u$ and $v$ of $G$ is
%the length of the shortest path in $G$ from $u$ to $v$. 
The {\em diameter} of
a graph $G$, denoted by $diam(G)$, is defined to be the maximum length of a
shortest path between any pair of vertices of $V(G)$. 
%By an abuse of notation,
%%we regard diameter of an isolated vertex as $1$ and
%we define diameter of a graph as the
%maximum of the diameters of its connected components.
%For $r \geq 0$, the {\it $r$-neighborhood} of a vertex $v\in V$ is defined as
%$N^r_G[v] = \{u\ | \ d_G(v,u) \leq r\}$. We also let $B(r,v)=N^r_G[v]$ and call it a
%ball of radius $r$ around $v$. Similarly $B(r,A)=\cup_{v\in A} N^r_G[v]$
%for $A\subseteq V(G)$. Given a weight function
%$w:~V \rightarrow \mathbb{R}^{+}\cup \{0\} $ and $A \subseteq V(G)$,
%$w(A)=\sum_{u \in A}w(u)$. Given two sets $S_1$ and $S_2$, the
%{\em symmetric difference} between  $S_1$ and $S_2$ is defined as
%$S_1\bigtriangleup S_2=(S_1\setminus S_2) \cup (S_2\setminus S_1)$.

Given an edge $e=uv$ of a graph $G$, the
graph $G/e$ is obtained by contracting the edge $uv$; that is, we get
$G/e$ by identifying the vertices $u$ and $v$ and removing all the loops
and duplicate edges. A {\em minor} of a graph $G$ is a graph $H$ that can
be obtained from a subgraph of $G$ by contracting edges.  A graph
class $\mathcal C$ is {\em  minor closed} if any minor of any graph
in $\mathcal C$ is also an element of $\mathcal C$. A minor closed
graph class $\mathcal C$ is $H${\em -minor-free}  or simply
$H${\em -free} if
$H \notin \mathcal C$. A graph $H$ is called an apex graph if the removal 
of one vertex makes it a planar graph.

A {\em tree decomposition} of a (undirected) graph $G$ is a pair
$(X,T)$ where $T$ is a tree whose vertices we will call {\em
nodes} and $X=(\{X_{i} \mid i\in V(T)\})$ is a collection of
subsets of $V(G)$ such that $(a)$ $\bigcup_{i \in V(T)} X_{i} = V(G)$, 
$(b)$  for each edge $vw \in E(G)$, there is an $i\in V(T)$
such that $v,w\in X_{i}$, and $(c)$ 
for each $v\in V(G)$ the set of nodes $\{ i \mid v \in X_{i}
\}$ forms a subtree of $T$. 
% \begin{enumerate}
% \item $\bigcup_{i \in V(T)} X_{i} = V(G)$,
% \item for each edge $(v,w) \in E(G)$, there is an $i\in V(T)$
% such that $v,w\in X_{i}$, and
% \item for each $v\in V(G)$ the set of nodes $\{ i \mid v \in X_{i}
% \}$ forms a subtree of $T$.
% \end{enumerate}
The {\em width} of a tree decomposition $(\{ X_{i} \mid i \in V(T)\}, T)$ equals $\max_{i \in V(T)} \{|X_{i}| - 1\}$. The {\em
treewidth} of a graph $G$ is the minimum width over all tree
decompositions of $G$. We use notation $\tw(G)$ to denote the
treewidth of a graph $G$.
%In this work, we study \emph{fixed parameter tractable (FPT)} algorithms for the decision problems. We choose some parameter $k$, here for {\sc $k$-Leaf Spanning Tree}  the  number of leaves $k$ of the spanning tree solution. We say,  an algorithm is an {\em FPT algorithm} or {\em parameterized} if its time complexity is bounded by a function of the form $f(k) \cdot n^{\cO(1)}$, where the {\em parameter function} $f$ may be any computable function only depending on $k$. 
%%FPT algorithms are well-studied and classified. 
%The book of Downey and Fellows~\cite{DowneyF99} provides an introduction into parameterized complexity. See the books of Flum and Grohe~\cite{FlumGrohebook} and Niedermeier~\cite{Niedermeierbook06} for more recent approaches to parameterized complexity.  
%A basic and fundamental  approach to achieve FPT algorithms is \emph{Kernelization}.

%An {\em $f(k)$-kernelization} for a parameterized graph problem  is a polynomial time algorithm that reduces an instance $(G,k)$ consisting of a graph $G$ and integer $k$ to an equivalent instance $(G',k')$ with $|V(G')|\leq |V(G)|$, $k'\leq k$ and $|V(G')|\leq f(k')$.  

A parameterized problem is said to admit a {\it polynomial kernel} if there is a polynomial time algorithm (where the degree of the polynomial is independent of $k$), called a {\em kernelization} algorithm, that reduces the input instance down to an instance with size bounded by a polynomial $p(k)$ in $k$, while preserving the answer. This reduced instance is called a {\em $p(k)$ kernel} for the problem. %See~\cite{GN07,Niedermeierbook06} for an introduction to kernelization.
See~\cite{Niedermeierbook06} for an introduction to kernelization.

%\section{Method I --  Quasi Bidimensionality and Many to One Reduction to Graphs of Bounded Treewidth}
\section{Method I --  Quasi Bidimensionality}
In this section we present our first approach. In general, a subexponential time algorithm using bidimensionality 
is obtained by showing that the solution for a problem in question is at least $\Omega (k^2)$ on $k\times k$ 
(contraction) grid minor. 
Using this we reduce the problem to a question on graph with treewidth  $o(k)$. 
%For directed graph 
%problems,  this approach does not seem to be plausible as we relate the treewidth of the input graph to the underlying 
%undirected graph and in this process we may get an inconsistent solution for our problem. 
We start with a lemma which 
enables  us to use the framework of bidimensionality for digraph problems, though not as directly as for 
undirected graph problems. 

\begin{lemma}
\label{lem:qb}
Let $D$ be a digraph such that $UG(D)$ excludes a fixed graph $H$ as a minor. For any constant $c\geq 1$, if there exists a subset $S\subseteq V(D)$ with $|S|=s$ such that 
%$\tw(UG(D\setminus S) \leq c$, 
$\tw(UG(D[V(D)\setminus S]))\leq c$,
  then 
$\tw(UG(D)) = \cO(\sqrt{s})$. 
\end{lemma}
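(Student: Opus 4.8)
The plan is to strip away the digraph language and prove a purely undirected statement about $G := UG(D)$, then run a bidimensionality-style argument on it. Concretely: $G$ is $H$-minor-free, there is a set $S$ with $|S| = s$ and $\tw(G - S) \le c$, and we want $\tw(G) = \cO(\sqrt s)$. If $s = 0$ then $G - S = G$ and there is nothing to prove, so assume $s \ge 1$. The engine is the Excluded Grid Theorem for $H$-minor-free graphs of Demaine and Hajiaghayi: there is a constant $c_H$ depending only on $H$ such that $G$ contains the $\ell \times \ell$ grid as a minor for some $\ell \ge \tw(G)/c_H$. Hence it suffices to show $\ell = \cO(\sqrt s)$.

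First I would fix a minor model of the $\ell \times \ell$ grid in $G$, i.e.\ pairwise disjoint connected vertex sets $\{V_{i,j} : 1 \le i,j \le \ell\}$ such that there is an edge of $G$ between $V_{i,j}$ and $V_{i',j'}$ whenever $(i,j)$ and $(i',j')$ are adjacent cells of the grid. Call a cell $(i,j)$ \emph{bad} if $V_{i,j} \cap S \ne \emptyset$; since the branch sets are pairwise disjoint, at most $|S| = s$ cells are bad. Now set $m := c + 2$ and partition the $\ell \times \ell$ grid into $\lfloor \ell/m \rfloor^2$ axis-aligned $m \times m$ blocks, discarding the leftover border rows and columns.

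The key step: if $\lfloor \ell/m \rfloor^2 > s$, then by pigeonhole some block $B$ contains no bad cell. Restricting the minor model to the $m^2$ branch sets indexed by the cells of $B$ yields a minor model of the $m \times m$ grid, all of whose branch sets, and all of whose used connecting edges, avoid $S$ entirely; hence it is a minor model of the $m \times m$ grid inside $G - S = UG(D[V(D)\setminus S])$. But the $m \times m$ grid has treewidth $m = c + 2$, and treewidth is minor-monotone, so $\tw(G - S) \ge c + 2 > c$, contradicting the hypothesis. Therefore $\lfloor \ell/m \rfloor^2 \le s$, which gives $\ell < m(\sqrt s + 1) = (c+2)(\sqrt s + 1)$. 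Plugging this into the grid theorem, $\tw(G) \le c_H \ell < c_H(c+2)(\sqrt s + 1) = \cO(\sqrt s)$, where the suppressed constant depends only on $H$ and $c$, both of which are fixed.

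I do not expect a serious obstacle; the two points that need care are (i) using the \emph{linear} form of the excluded-grid bound — legitimate precisely because $UG(D)$ is $H$-minor-free, and this linearity is exactly what converts ``at most $s$ damaged cells'' into a $\sqrt s$ bound rather than something weaker; and (ii) verifying that deleting $S$ really does leave the clean $m \times m$ subgrid intact as a minor, which holds because disjointness of the branch sets confines all the damage to the at most $s$ bad cells, so a block missing every bad cell is untouched together with its internal connecting edges.
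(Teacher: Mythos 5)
Your proof is correct and follows essentially the same route as the paper's: invoke the linear excluded-grid theorem for $H$-minor-free graphs, observe that $|S|=s$ can meet at most $s$ branch sets of the grid model, and use a pigeonhole over $(c+O(1))\times(c+O(1))$ blocks to find an untouched subgrid of treewidth exceeding $c$ inside $G-S$, a contradiction. Your write-up is in fact slightly more careful than the paper's (you make the minor model and the block count explicit, requiring strictly more blocks than damaged cells), but the idea is identical.
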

\begin{proof}
%{\bf Frederic}\\
By \cite{DemaineH08}, for any $H$-minor-free graph $G$ with treewidth more than $r$, there is a constant $\delta > 1$ only dependent on $H$ such that $G$ has a 
$\frac{r}{\delta} \times \frac{r}{\delta}$ grid minor. Suppose  $\tw(UG(D)) > \delta (c+1) \sqrt{s}$
 then $UG(D)$ contains a $(c+1) \sqrt{s}\times (c+1) \sqrt{s}$ grid as a minor. Notice that this grid minor can not be destroyed by any vertex set $S$ of size at most $s$. 
That is, if we delete any vertex set $S$ with $|S|=s$ from this grid, it will still contain a 
$(c+1)\times (c+1)$ subgrid. Thus, $UG(D[V(D)\setminus S])$ contains a 
$(c+1)\times (c+1)$  grid minor and hence by~\cite[Exercise 11.6]{FlumGrohebook} we have that 
$\tw(UG(D[V(D)\setminus S]))>c$. This shows that we need to delete more than 
$s$ vertices from $UG(D)$ to obtain
a graph with treewidth at most $c$, a contradiction.
\end{proof}

Using Lemma~\ref{lem:qb}, we show that {\sc $k$-Leaf-Out-Branching}  problem has 
a subexponential time algorithm on digraphs $D$ such that $UG(D)$ exclude a fixed graph $H$ as a minor.  
For our purpose a rooted version of \LOB{} will also be useful which we define now. In  the 
{\sc Rooted $k$-Leaf-Out-Branching (\rLOB)} problem apart from $D$ and $k$ %as an input we are also given 
the root $r$ of the tree searched for is also a part of the input and the objective is to check whether there exists an  
$r$-out-branching with at least $k$ leaves. We now state our main combinatorial lemma and postpone its proof for a while. 
%which will be proved later.

\begin{lemma}
\label{lem:struct}
Let $D$ be a digraph such that $UG(D)$ excludes a fixed graph $H$ as a minor, $k$ be a positive integer and 
$r\in V(D)$ be the root. Then in polynomial time either we can construct an $r$-out-branching with at least $k$ leaves in $D$ 
or find a digraph $D'$ such that following holds. 
\begin{itemize}
\setlength{\itemsep}{-3pt}
\item %(a) 
$UG(D')$ excludes the fixed graph $H$ as a minor; 
\item %(b) 
$D$ has an $r$-out-branching with at least $k$ leaves if and only if $D'$ has an $r$-out-branching with at least $k$ leaves; 
\item %(c) 
there exists a 
subset $S\subseteq V(D')$ such that $|S| =\cO(k)$ and $\tw(U(D'[V(D')\setminus S])\leq c$, 
$c$ a constant. 
 \end{itemize}
 \end{lemma}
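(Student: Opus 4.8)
The plan is to bring the instance into a path-like normal form using reduction rules that are guaranteed to preserve $H$-minor-freeness, and then to read off the required decomposition from a breadth-first out-branching. First, dispose of the degenerate case: if some vertex of $D$ is not reachable from $r$, then $D$ has no $r$-out-branching at all, and we may output any fixed $H$-minor-free digraph $D'$ that has no $r$-out-branching whatsoever (for instance $r$ together with one isolated vertex); all three items then hold trivially with $S=\emptyset$ and $\tw(UG(D'))=0$. So assume every vertex of $D$ is reachable from $r$. Now apply the reduction rules for \rLOB{} of Fernau et al.~\cite{FernauFLRSV09}, together with the sharper rules of Daligault and Thomass\'e~\cite{abs-0904-2658}, exhaustively; whenever a rule certifies an $r$-out-branching with at least $k$ leaves, lift it back through the applied reductions to such an out-branching of $D$ and output it. Two properties of these rules give the first and second conclusions of the lemma directly: (i) each rule runs in polynomial time and preserves reachability from $r$ as well as the existence of an $r$-out-branching with at least $k$ leaves; (ii) each rule only deletes a vertex, deletes an arc, or contracts an arc, and since a directed-arc contraction induces an edge contraction in the underlying graph (up to removal of parallel edges), the underlying graph of the reduced digraph $D'$ is a minor of $UG(D)$ and hence still excludes $H$ as a minor.

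It remains to establish the third conclusion for the exhaustively reduced $D'$ (on which no rule applies). Since $D'$ is still reachable from $r$, compute a BFS out-branching $T$ of $D'$ rooted at $r$. If $T$ has at least $k$ leaves, output it (and lift it back to $D$). Otherwise $T$ has fewer than $k$ leaves, hence at most $k-2$ vertices of out-degree at least two in $T$; let $B$ be the union of these branching vertices and the leaves, so $|B|<2k$. Deleting $B$ from $T$ leaves $\cO(k)$ vertex-disjoint directed subpaths $P_{1},\dots,P_{m}$ of $T$ that cover $V(D')\setminus B$. The key claim is that, because $D'$ is exhaustively reduced, each $P_{i}$ is \emph{clean} in $UG(D')$: every edge of $UG(D')$ incident with an interior vertex of some $P_{i}$ is an edge of the path $P_{i}$ itself. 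Granting the claim, take $S$ to be $B$ together with the two endpoints of every $P_{i}$; then $|S|=\cO(k)$, and $UG(D'[V(D')\setminus S])$ is a disjoint union of subpaths, so its treewidth is at most $1$. Thus the third conclusion holds with $c=1$, which together with Lemma~\ref{lem:qb} is exactly what the subsequent algorithm needs.

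The clean-segment claim is the crux, and it is precisely where the reduction rules are indispensable: a BFS out-branching by itself does not suffice, since an arc $v_{a}\to v_{b}$ of $D'$ need only satisfy $\mathrm{dist}(r,v_{b})\le\mathrm{dist}(r,v_{a})+1$, so a segment $v_{1}\to\cdots\to v_{\ell}$ may carry arbitrarily many ``back-arcs'' $v_{a}\to v_{b}$ with $b<a$, and $UG(D')$ restricted to a single segment can then contain large cliques. The real work is therefore to go rule by rule through the Fernau et al.\ / Daligault--Thomass\'e list and verify that an instance on which no rule applies cannot contain (a) a chord inside a path segment, (b) an arc between two distinct segments, or (c) a superfluous in-arc into the interior of a segment --- intuitively, each such arc would render some rule (deletion of a dominated or useless arc, short-circuiting of a degree-two vertex, contraction of a forced arc, and so on) applicable. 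I expect this case analysis to be the main obstacle; by contrast the reachability preprocessing, the minor-closedness of the rules under lifting, the count of branching vertices of $T$, and the fact that a disjoint union of paths has treewidth $1$ are all routine.
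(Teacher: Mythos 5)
Your proof proposal is incomplete in exactly the place you flag yourself: the ``clean segment'' claim is not proven, and it is the entire content of the lemma. Without it, nothing stops a back-arc $v_a\to v_b$ inside a BFS segment, an arc between two segments, or extra in-arcs into interior vertices; the BFS distance constraint $\mathrm{dist}(r,v_b)\le \mathrm{dist}(r,v_a)+1$ permits all of these. More importantly, I see no reason to believe the known reduction rules yield that structure. Kernelization rules for \rLOB{} are designed to shrink the instance while preserving the answer; they are not designed to eliminate arcs whose endpoints happen to lie on a common degree-two chain of some BFS out-branching, and such arcs can genuinely change how many leaves are achievable (e.g.\ $v_a\to v_b$ can let $v_b$'s original parent become a leaf). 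So you would not merely be ``going rule by rule''; you would be inventing new reduction rules and proving their safety, which is the whole lemma. A further concrete problem: you assume the Daligault--Thomass\'e rules preserve $H$-minor-freeness, but the paper explicitly avoids their kernel for precisely the reason that \emph{``they do not preserve the graph class.''}

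The paper's proof is structurally different and does not rely on any such ``cleanliness'' of BFS segments. It splits on whether $D$ is rooted $2$-connected. In the $2$-connected case it invokes two combinatorial bounds of Daligault--Thomass\'e: if there are at least $6k$ vertices of in-degree $\ge 3$, or at least $24k$ \emph{nice} vertices (vertices with an in-neighbor that is not an out-neighbor), then an $r$-out-branching with $\ge k$ leaves can be found in polynomial time; otherwise the set $S$ of all such vertices has $|S|<30k$, and one shows directly that $D[V\setminus S]$ has every vertex of undirected degree at most $2$ with all adjacencies being $2$-cycles, and has no long directed cycle by $2$-connectivity, so $UG(D[V\setminus S])$ is a disjoint union of paths ($\tw\le 1$). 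In the non-$2$-connected case, the paper first contracts each cut-arc that disconnects $\ge 2$ vertices (a safe, minor-preserving operation), then either finds $\ge k$ leaves from the $\ge k$ cut vertices with two or more cut-neighbors, or duplicates those cut vertices and contracts the matching of remaining single-cut-neighbor arcs to reach a $2$-connected digraph, to which Case 1 applies; all transformations are shown to preserve the answer up to an additive $\ell<k$, and undoing them at most doubles bag sizes, giving $c\le 3$. Your degenerate-reachability preprocessing, your counting of branching vertices, and the path/treewidth observations are all fine as bookkeeping; the missing piece is a provable structural bound on the vertices that must be deleted, and that is what the in-degree/nice-vertex lemmas supply in the paper's proof.
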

 
Combining Lemmata~\ref{lem:qb} and~\ref{lem:struct} we obtain the following result. 
\begin{lemma}
\label{lemma:twbound}
Let $D$ be a digraph such that $UG(D)$ excludes a fixed graph $H$ as a minor, $k$ be a positive integer and 
$r\in V(D)$ be a root. Then in polynomial time either we can construct an $r$-out-branching with at least $k$ leaves in $D$  or find a digraph $D'$ such that $D$ has an 
$r$-out-branching with at least $k$ leaves if and only if $D'$ has an $r$-out-branching with at least $k$ leaves. Furthermore $\tw(UG(D'))= \cO(\sqrt k)$. 
 \end{lemma}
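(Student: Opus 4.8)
The plan is to obtain the statement by a direct concatenation of the two preceding lemmas, with no new combinatorial input. Given an instance $(D,r,k)$ with $UG(D)$ excluding the fixed graph $H$ as a minor, I would first invoke Lemma~\ref{lem:struct}. In polynomial time this either already produces an $r$-out-branching of $D$ with at least $k$ leaves --- in which case we are done and simply output it --- or it returns a digraph $D'$ that (i) is still $H$-minor-free on its underlying graph, (ii) is equivalent to $D$ with respect to the existence of an $r$-out-branching with at least $k$ leaves, and (iii) admits a vertex set $S\subseteq V(D')$ with $|S|=\cO(k)$ whose removal leaves an underlying graph of treewidth at most the constant $c$.

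In the second case I would then apply Lemma~\ref{lem:qb} to $D'$ with this set $S$, taking $s=|S|=\cO(k)$. Since $UG(D')$ is $H$-minor-free and $\tw(UG(D'[V(D')\setminus S]))\leq c$, Lemma~\ref{lem:qb} yields $\tw(UG(D'))=\cO(\sqrt{s})=\cO(\sqrt{k})$, as required. The equivalence between $D$ and $D'$ for the $k$-leaf out-branching question is inherited verbatim from Lemma~\ref{lem:struct}, and the whole procedure runs in polynomial time because both invoked lemmas do.

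The only subtlety worth flagging is the bookkeeping of constants: the constant $c$ and the hidden constant in $|S|=\cO(k)$ come from Lemma~\ref{lem:struct}, while the constant $\delta$ (and hence the precise hidden constant in the final $\cO(\sqrt{k})$ bound) comes from the grid-minor theorem invoked inside Lemma~\ref{lem:qb}; all of these depend only on $H$, not on $k$ or $n$, so the resulting treewidth bound is genuinely $\cO(\sqrt{k})$. There is essentially no obstacle in this particular proof --- the real work is deferred to Lemma~\ref{lem:struct}, whose proof must supply the problem-specific reduction (roughly, a kernelization-style pruning of $D$ so that, apart from $\cO(k)$ vertices, the remainder has bounded treewidth), and that is where the effort in this section actually lies.
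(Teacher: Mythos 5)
Your proposal is correct and is exactly what the paper does: Lemma~\ref{lemma:twbound} is obtained there by directly combining Lemma~\ref{lem:struct} (which supplies the equivalent instance $D'$ and the set $S$ of size $\cO(k)$) with Lemma~\ref{lem:qb} (which converts the deletion-to-constant-treewidth property into the $\cO(\sqrt{k})$ treewidth bound). Your remark about the constants depending only on $H$ is accurate and there is no gap.
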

 
 %\begin{lemma}
%\label{lem:dpriob}
%Let $(D,r,k)$ be an instance of \rLOB{} such that $\tw(UG(D))\leq t$. Then we can find an $r$-out-branching 
%with maximum number of leaves in time $t^{\cO(t)} \cdot |V(D)|^{\cO(1)}$. 
%\end{lemma}
%\begin{proof}
%Frederic.
%\end{proof}

When a tree decomposition of $UG(D)$ is given, dynamic programming methods can be used to decide whether $D$ has an out-branching with at least $k$ leaves, 
 see ~\cite{abs-0801-1979}.
% (see ~\cite{Bod97,abs-0801-1979}).
The time complexity of such a procedure is $2^{\cO(w\log w)} n$, where $n=|V(D)|$ and $w$ is the width of the tree decomposition. Now we are ready to prove the main theorem of this section 
assuming the combinatorial Lemma \ref{lem:struct}. 

\begin{theorem}
\label{thm:LOB}
The \LOB{} problem can be solved in time $2^{\cO(\sqrt k \log k)} n + n^{\cO(1)}$ on digraphs with $n$ vertices such that the underlying undirected graph excludes a fixed graph $H$ as a minor. 
\end{theorem}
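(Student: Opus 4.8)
The plan is to assemble Theorem~\ref{thm:LOB} from the pieces already in place, handling separately the rooted and unrooted versions and a final brute-force-over-roots step. First I would observe that Lemma~\ref{lemma:twbound} already gives, in polynomial time, for the \emph{rooted} problem \rLOB{} with a fixed root $r$, either an $r$-out-branching with at least $k$ leaves, or an equivalent instance $D'$ with $\tw(UG(D')) = \cO(\sqrt k)$. In the latter case I would run a constant-factor treewidth approximation on $UG(D')$ (using, say, the algorithm for $H$-minor-free graphs) to actually obtain a tree decomposition of width $\cO(\sqrt k)$, and then invoke the dynamic-programming routine mentioned in the paragraph before the theorem, which decides in time $2^{\cO(w\log w)}n$ whether $D'$ has an $r$-out-branching with at least $k$ leaves, with $w = \cO(\sqrt k)$. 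Substituting $w = \cO(\sqrt k)$ gives a running time of $2^{\cO(\sqrt k\log k)}n$ plus the polynomial overhead of kernelization, the structural lemma, and the treewidth approximation, i.e. $2^{\cO(\sqrt k\log k)}n + n^{\cO(1)}$ for \rLOB{}.

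Next I would reduce the unrooted \LOB{} to its rooted version. The naive reduction tries every vertex of $D$ as the root, costing a factor of $n$; but if $D$ has an out-branching with at least $k$ leaves, then it has an $r$-out-branching with at least $k$ leaves where $r$ ranges over some set of candidate roots, and crucially an out-branching exists rooted at $r$ only if every vertex is reachable from $r$. More importantly, to keep the running time of the form $2^{\cO(\sqrt k\log k)}n + n^{\cO(1)}$ (rather than $2^{\cO(\sqrt k\log k)}n^2$), I would first apply a polynomial-time kernelization: run the rooted kernel of Fernau et al.~\cite{FernauFLRSV09} — and note that for the unrooted problem one can reduce to polynomially many rooted instances and kernelize each, or alternatively use that the unrooted problem on the relevant graph classes reduces, after polynomial preprocessing, to $n^{\cO(1)}$ rooted instances each of size $\cO(k^3)$. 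One clean way: guess the root among the $\cO(1)$-in-degree candidates or simply iterate over all $n$ roots, but only pay the expensive $2^{\cO(\sqrt k\log k)}$ factor \emph{once}, after shrinking the graph to a kernel of size $\cO(k^3)$ via the rooted kernelization — then even an $n$-fold (now $k^{\cO(1)}$-fold) loop over roots contributes only $k^{\cO(1)}\cdot 2^{\cO(\sqrt k\log k)} = 2^{\cO(\sqrt k\log k)}$, and reading/solving on the original graph adds the additive $n^{\cO(1)}$ and one linear-in-$n$ pass.

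So the ordering of steps is: (1) kernelize the input $(D,k)$ in polynomial time to get an equivalent instance $(\hat D, k)$ with $|V(\hat D)| = k^{\cO(1)}$ (via the rooted kernel applied appropriately, or noting the underlying graph class is preserved since kernelization only deletes/contracts); (2) for each candidate root $\hat r \in V(\hat D)$, apply Lemma~\ref{lemma:twbound} to the rooted instance $(\hat D, \hat r, k)$, either finding the desired out-branching or producing $\hat D'$ with $\tw(UG(\hat D')) = \cO(\sqrt k)$; (3) compute a width-$\cO(\sqrt k)$ tree decomposition of $UG(\hat D')$ and run the $2^{\cO(w\log w)}|V|$ dynamic program; (4) answer yes iff some root succeeds; (5) account for the original input by adding the $n^{\cO(1)}$ kernelization cost and a linear term. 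Since the graph class ``$UG(D)$ is $H$-minor-free'' is closed under the operations used, every intermediate digraph still has $H$-minor-free underlying graph, so Lemmas~\ref{lem:qb}, \ref{lem:struct}, \ref{lemma:twbound} all apply throughout.

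The main obstacle I anticipate is bookkeeping the running time so that the $n$ appears only linearly and multiplied by the subexponential factor rather than in a product with $n^{\cO(1)}$ — this is exactly why kernelization must come first and why the per-root work must be done on the kernel, not the original graph. A secondary subtlety is that \LOB{} (unrooted) is known \emph{not} to admit a polynomial kernel unless the polynomial hierarchy collapses, so step~(1) cannot literally be ``kernelize \LOB{}''; instead one either kernelizes each of the $n$ rooted instances separately (each rooted instance \emph{does} have an $\cO(k^3)$, even $\cO(k^2)$, kernel by~\cite{FernauFLRSV09,abs-0904-2658}) and pays $n \cdot k^{\cO(1)} = n^{\cO(1)}$ for the kernelization plus $n \cdot 2^{\cO(\sqrt k \log k)}$ for solving, which still gives $2^{\cO(\sqrt k\log k)}n + n^{\cO(1)}$. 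That is the form stated, so the argument goes through; the remaining work is the routine verification that the dynamic programming of~\cite{abs-0801-1979} indeed runs in $2^{\cO(w\log w)}n$ and that all the polynomial-time subroutines compose correctly.
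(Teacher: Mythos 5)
Your proposal is correct and follows essentially the same route as the paper: guess the root, apply the $\cO(k^3)$ class-preserving rooted kernel of Fernau et al.\ to each rooted instance, then use Lemma~\ref{lemma:twbound}, the treewidth approximation, and the $2^{\cO(w\log w)}n$ dynamic program on the kernel, yielding $n\cdot k^{\cO(1)} + n\cdot 2^{\cO(\sqrt k\log k)} = 2^{\cO(\sqrt k\log k)}n + n^{\cO(1)}$. Your closing remark correctly resolves the one wobble in your step~(1) — unrooted \LOB{} has no polynomial kernel, so the kernelization must indeed be applied per rooted instance, exactly as the paper does.
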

\begin{proof}
Let $D$ be a digraph where $UG(D)$ excludes a fixed graph $H$ as a minor. We guess a vertex $r\in V(D)$ as a root. This only adds a factor of $n$ to our algorithm. By Lemma~\ref{lemma:twbound}, we can either compute, in polynomial time, an $r$-out-branching with at least $k$ leaves in $D$ or find a digraph $D'$ with $UG(D')$ excluding a fixed graph $H$ 
as a minor and $\tw(UG(D'))=\cO(\sqrt k)$. In the later case, using the constant factor approximation algorithm of Demaine et al.~\cite{DemaineHK05} for computing the treewidth of a $H$-minor free graph,  
we find a tree decomposition  of width $\cO(\sqrt k)$ for $UG(D')$ in time $n^{\cO(1)}$. With the previous observation that we can find an $r$-out-branching with at least $k$ leaves, if exists one, 
in time $2^{\cO(\sqrt k \log k)} n$ using dynamic programming over graphs of bounded treewidth, we have that we can solve \rLOB{} in time $2^{\cO(\sqrt k \log k)}   n^{\cO(1)}$. Hence, we need 
$2^{\cO(\sqrt k \log k)}n^{\cO(1)}$ to solve the \LOB{} problem. 

To obtain the claimed running time bound we use  the known kernelization algorithm after we have guessed the root $r$. 
Fernau et al.~\cite{FernauFLRSV09}  
gave an $\cO(k^3)$ kernel for \rLOB{} which preserves the graph class. That is, given an instance $(D,k)$ of \rLOB{}, in polynomial time they output an equivalent instance $(D'',k)$ of \rLOB{} 
such that (a) if $UG(D)$ is $H$-minor free then so is $UG(D'')$; and (b) $|V(D'')| =\cO(k^3)$. 
We will use this kernel for our algorithm rather than the $\cO(k^2)$ kernel for \rLOB{} obtained by Daligault and Thomass{\'e}~\cite{abs-0904-2658}, as they do not preserve the graph class. 
So after we have guessed the root $r$, we obtain an equivalent instance $(D'',k)$ for \rLOB{} using the kernelization procedure described in~\cite{FernauFLRSV09}.  Then using the algorithm described in the 
previous paragraph we can solve  \rLOB{} in time $2^{\cO(\sqrt k \log k)}  + n^{\cO(1)}$. Hence, we need 
$2^{\cO(\sqrt k \log k)}n  + n^{\cO(1)}$ to solve \LOB{}. 
%By Lemma~\ref{lemma:twbound}, we can either compute, in polynomial time, an $r$-out-branching with at least $k$ leaves for $D''$ or find a digraph $D'$ with $UG(D')$ excluding a fixed graph $H$ 
%as a minor and $\tw(UG(D'))=\cO(\sqrt k)$. In the later case, using the constant factor approximation algorithm of Demaine et al.~\cite{DemaineHK05} for computing the treewidth of a $H$-minor free graph,  
%we find a tree decomposition  of width $\cO(\sqrt k)$ for $UG(D')$ in time $k^{\cO(1)}$. With the previous observation that we can find an $r$-out-branching with at least $k$ leaves, if exists one, 
%in time $2^{\cO(\sqrt k \log k)} \cO(k^3)$ using dynamic programming over graphs of bounded treewidth, we have that we can solve \rLOB{} in time $2^{\cO(\sqrt k \log k)}  + n^{\cO(1)}$. Hence, we need 
%$2^{\cO(\sqrt k \log k)}n  + n^{\cO(1)}$ to solve the \LOB{} problem. 
\end{proof}

%Let $D$ be a planar digraph. Then 
Given a tree decomposition of width $w$ of $UG(D)$ for a planar digraph $D$, we can solve  \LOB{} 
using dynamic programming methods  
%decide whether $D$ has an out-branching with at least $k$ leaves 
in time $2^{\cO(w)} n$. This brings us to the following theorem. 
%The proof of the following lemma is moved to the appendix.
 
\begin{theorem}$[\star]$\footnote{The proofs marked with [$\star$] have been moved to the appendix due to space restrictions.}
\label{lemma:planarLOB}
 The \LOB{} problem can be solved in time $2^{\cO(\sqrt k)}n + n^{\cO(1)}$ on digraphs with $n$ vertices when the underlying undirected graph is planar. 
 \end{theorem}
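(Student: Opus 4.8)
The plan is to follow the proof of Theorem~\ref{thm:LOB} almost verbatim, replacing only the dynamic-programming ingredient by its planar counterpart so that the spurious $\log k$ factor in the exponent disappears. First I would guess the root $r\in V(D)$, which costs a multiplicative factor of $n$; from now on all work is on the rooted variant \rLOB{}. Next, apply the kernelization of Fernau et al.~\cite{FernauFLRSV09} to the instance $(D,k,r)$: since planar graphs are exactly the graphs excluding $K_5$ and $K_{3,3}$ as a minor, and since this kernel preserves the graph class, in time $n^{\cO(1)}$ we obtain an equivalent instance $(D'',k,r)$ of \rLOB{} with $UG(D'')$ planar and $|V(D'')|=\cO(k^3)$.

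Then I would run the combinatorial Lemma~\ref{lem:struct} on $D''$: in polynomial time we either already exhibit an $r$-out-branching with at least $k$ leaves (and are done), or we get a digraph $D'$ with $UG(D')$ planar, equivalent to $D''$ for \rLOB{}, together with a set $S\subseteq V(D')$ of size $\cO(k)$ such that $\tw(UG(D'[V(D')\setminus S]))\le c$ for a constant $c$. Feeding this into Lemma~\ref{lem:qb} (with $H=K_5$) gives $\tw(UG(D'))=\cO(\sqrt k)$, exactly as in Lemma~\ref{lemma:twbound}. Using the constant-factor treewidth approximation for $H$-minor-free (in particular planar) graphs of Demaine et al.~\cite{DemaineHK05}, I would compute in time $|V(D')|^{\cO(1)}=\mathrm{poly}(k)$ a tree decomposition of $UG(D')$ of width $w=\cO(\sqrt k)$.

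Finally, invoke the planar dynamic programming for \LOB{} mentioned just before the statement, which decides in time $2^{\cO(w)}\cdot|V(D')|$ whether $D'$ has an $r$-out-branching with at least $k$ leaves. Since $w=\cO(\sqrt k)$ and $|V(D')|=\mathrm{poly}(k)$, this step runs in $2^{\cO(\sqrt k)}$. Accounting for the polynomial-time kernelization and the polynomial-time treewidth computation, \rLOB{} is solved in $2^{\cO(\sqrt k)}+n^{\cO(1)}$; multiplying by the $n$ choices of root yields the claimed $2^{\cO(\sqrt k)}n+n^{\cO(1)}$ bound for \LOB{}.

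The only genuinely new ingredient relative to Theorem~\ref{thm:LOB} is the single-exponential $2^{\cO(w)}n$ dynamic program for counting leaves of an out-branching on a plane digraph; the remaining work is bookkeeping to verify that each reduction (the Fernau et al.\ kernel and Lemma~\ref{lem:struct}) keeps the underlying graph planar, and that after kernelization the instance size is polynomial in $k$, so that the subexponential factor is measured against $k$ and not $n$. I expect the main obstacle to be precisely that planar dynamic programming: it requires working over a branch decomposition (sphere-cut decomposition) of the plane digraph and bounding the number of partial solutions by a Catalan-type quantity — essentially noncrossing partitions of the vertices lying on a noose — rather than using the naive $2^{\cO(w\log w)}$ table indexed by all partitions. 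This Catalan-structure argument is exactly what is being assumed when we cite the $2^{\cO(w)}n$ running time.
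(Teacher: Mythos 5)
Your proposal is correct and follows essentially the same route as the paper: the appendix proof states explicitly that everything is as in Theorem~\ref{thm:LOB} except for the dynamic programming, and then outlines the $2^{\cO(w)}n$ planar routine via sphere-cut/branch decompositions and the Catalan-type bound on crossing patterns from~\cite{DornPBF09}, which is exactly the ingredient you identify as the only genuinely new one.
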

 
% By considering the structural decomposition and arguments used in~\cite{DornFT08}, we find parallel arguments to extend 
%   the class of graphs in Lemma~\ref{lemma:planarLOB}  to the class of $H$-minor-free graphs.
 
%We are now ready to prove our main combinatorial lemma.
 
\subsection{Proof of Lemma~\ref{lem:struct}}
%~\ref{lem:struct}.}  
To prove the combinatorial lemma we need a few recent results 
from the literature on out-branching problems. We start with some definitions given in~\cite{abs-0904-2658}.  A {\em cut} of 
$D$ is a subset $S$ such that there exists a vertex $z\in V(D)\setminus S$ such that $z$ is not reachable from $r$ in $D[V(D)\setminus S]$. 
We say that $D$ is \emph{$2$-connected} if there exists no cut of size one in $D$ or equivalently there are at least two vertex disjoint paths from $r$ to 
every vertex in $D$. 

\begin{lemma}
[\cite{abs-0904-2658}]
\label{stephanone}
Let $D$ be a rooted $2$-connected digraph with $r$ being its root. Let $\alpha$ be the number of vertices in $D$ with in-degree at 
least $3$. Then $D$ has an out-branching rooted at $r$ with at least $\alpha/6$ 
leaves and such an out-branching can be found in polynomial time.
\end{lemma}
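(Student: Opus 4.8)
\textbf{Proof proposal for Lemma~\ref{stephanone}.}
The plan is to build the out-branching by local search and then analyse a locally optimal solution by a charging argument, $2$-connectivity entering only to supply certain reconnection arcs. Since $D$ is rooted, every vertex is reachable from $r$, so $D$ has an $r$-out-branching; start from a breadth-first tree $T$ rooted at $r$. Now repeatedly apply a \emph{leaf-improving exchange}: if there is an arc $xc\in A(D)$ such that replacing, in $T$, the arc entering $c$ by $xc$ yields again an $r$-out-branching with strictly more leaves, perform this replacement. Each step strictly increases the number of leaves, so after at most $|V(D)|$ steps we obtain, in polynomial time, a \emph{locally leaf-maximal} $r$-out-branching $T$. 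In the initial $T$ the root has all its $D$-out-neighbours as children, and a leaf-improving exchange can never remove a child of $r$ (this would only lose leaves), so the final $T$ has at least two leaves. It remains to prove that any locally leaf-maximal $r$-out-branching has at least $\alpha/6$ leaves.

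Fix such a $T$, let $\ell\ge 2$ be its number of leaves, and partition $V(D)$ by out-degree in $T$ into leaves $L$, \emph{branching} vertices $B$ (out-degree $\ge 2$; note $r\in B$) and \emph{series} vertices $S$ (out-degree exactly $1$). A standard count gives $|B|\le\ell-1$. Suppressing all series vertices turns $T$ into an out-tree with the same $\ell$ leaves and no series vertices, hence with at most $2\ell-2$ edges; consequently $T$ decomposes into at most $2\ell-2$ \emph{bare paths} (maximal directed subpaths whose internal vertices are all series), and every series vertex lies on exactly one of them. Let $X=\{v\in V(D):d^-(v)\ge 3\}$; since $d^-(r)=0$ we have $r\notin X$. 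The key tool is the following exchange property: if $v\in S$ has unique $T$-child $c$, then every in-neighbour $x$ of $c$ in $D$ with $x\ne v$ and $x\notin T_c$ (the subtree of $T$ rooted at $c$) is a leaf of $T$; indeed otherwise $T-vc+xc$ is again an $r$-out-branching (no directed cycle, as $x\notin T_c$) in which $v$ has become a leaf and $x$ stays internal, a net gain of a leaf, contradicting local maximality.

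I would now bound $|X|$. Its intersections with $L$ and with $B$ contribute at most $\ell$ and at most $\ell-1$, so it suffices to bound $|X\cap S|$, and for that it is enough to bound the number of vertices of $X$ on each of the $\le 2\ell-2$ bare paths. On a bare path $p_0\to p_1\to\cdots\to p_t$ with series vertices $p_1,\dots,p_{t-1}$, for $j\ge 2$ the parent $p_{j-1}$ is series, so the exchange property applied to $p_{j-1}$ shows that every in-neighbour of $p_j$ other than $p_{j-1}$ is a leaf of $T$ or a proper $T$-descendant of $p_j$; since $d^-(p_j)\ge 3$, at least two such ``witnesses'' exist. If $p_j$ has a witness that is a leaf outside $T_{p_j}$, then a \emph{double} exchange -- simultaneously redirecting two series vertices onto a common leaf-witness, turning two series vertices into leaves at the cost of one leaf becoming internal -- shows that such leaf-witnesses can be chosen injectively, bounding the number of such $p_j$, over all bare paths, by $\ell$. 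The remaining, crucial case is that every extra in-neighbour of $p_j$ is a proper descendant of $p_j$; here one uses $2$-connectivity (deleting $p_{j-1}$ keeps $p_j$ reachable from $r$, which forces a non-tree arc entering $T_{p_j}$ from outside $T_{p_j}$, hence a further exchange) to bound the number of such $p_j$ per bare path by a constant. Together with the single possibly uncontrolled vertex $p_1$ per bare path, this gives $|X\cap S|=\cO(\ell)$, and a careful bookkeeping of the constants in these exchanges yields $\alpha<6\ell$, i.e. $\ell>\alpha/6$. (For the uses of this lemma in the present paper the weaker estimate $\alpha=\cO(\ell)$ would already suffice.) The algorithm is the local search itself.

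The step I expect to be the main obstacle is precisely this last ``back-arc'' case: a single-arc exchange cannot absorb an in-neighbour of a vertex $v$ that is one of $v$'s own $T$-descendants, and re-rooting an entire subtree is awkward for out-trees, since one cannot reverse a directed path. This is exactly where $2$-connectivity must genuinely be used -- to produce the arc that re-attaches a detached subtree after an exchange -- and controlling how many such vertices can coexist on a bare path, and squeezing the final constant down to $6$, is the technical core; it follows the analysis of Daligault and Thomass\'e~\cite{abs-0904-2658}.
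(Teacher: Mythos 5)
This lemma is imported by the paper verbatim from Daligault and Thomass\'e~\cite{abs-0904-2658}; the paper contains no proof of it, so there is no in-paper argument to compare yours against. Judged on its own terms, your proposal follows the right general strategy (a locally leaf-maximal out-branching analysed by arc-exchange arguments, with $2$-connectivity supplying reconnection arcs), but it is not a proof: the decisive case is exactly the one you flag and then leave open. When a series vertex $p_j$ has in-degree at least $3$ but all of its in-neighbours other than $p_{j-1}$ are proper descendants of $p_j$ in $T$, no single-arc exchange applies, and your assertion that ``one uses $2$-connectivity \dots\ to bound the number of such $p_j$ per bare path by a constant'' is precisely the technical core of the lemma; nothing in the proposal establishes it. Deleting $p_{j-1}$ does give a second path from $r$ into $T_{p_j}$, hence an arc $uw$ with $u\notin T_{p_j}$ and $w\in T_{p_j}$, but exploiting it requires re-rooting a subtree of an out-tree, which (as you yourself note) cannot be done by reversing a path; how this yields a constant per bare path, and how the constants combine to give exactly $\alpha/6$, is entirely missing.

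Two smaller points also need repair. First, the ``double exchange'' step is not available at your local optimum: you defined local maximality only with respect to single-arc replacements, so a configuration improvable by a simultaneous redirection of two series vertices onto a common leaf-witness need not contradict anything. You would have to enlarge the neighbourhood of the local search to include these multi-arc moves (still polynomial, since each move strictly increases the leaf count), and then verify that the simultaneous redirections preserve acyclicity when the two witnesses' subtrees are nested. Second, the injectivity claim for leaf-witnesses is asserted rather than derived from the (enlarged) local optimality. As it stands the argument proves at best $|X\cap S|\le \ell + (\text{unbounded term})$, so the conclusion $\alpha<6\ell$ does not follow. For a correct and complete argument you should either reproduce the actual proof of~\cite{abs-0904-2658} or close the back-arc case explicitly.
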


A vertex $v \in V(D)$ is called a {\em nice vertex} if $v$ has an in-neighbor which is not 
its out-neighbor. The following lemma is proved in~\cite{abs-0904-2658}. 
\begin{lemma}
[\cite{abs-0904-2658}]
\label{stephantwo}
Let $D$ be a rooted $2$-connected digraph rooted at a vertex $r$. 
Let $\beta$ be the number of nice vertices in $D$. Then $D$ has an out-branching rooted at $r$ with at least $\beta/24$ leaves 
and such an out-branching can be found in polynomial time.  
\end{lemma}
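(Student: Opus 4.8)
The plan is to deduce Lemma~\ref{stephantwo} from Lemma~\ref{stephanone} by a case analysis on the in-degrees of the nice vertices; the constant $24 = 4\cdot 6$ already suggests that one loses a factor of $4$ in a reduction and then a factor of $6$ when invoking Lemma~\ref{stephanone}. First observe that, since $D$ is rooted $2$-connected, every vertex of $V(D)\setminus\{r\}$ has in-degree at least $2$, because two vertex-disjoint paths from $r$ to such a vertex must enter it through two distinct arcs. Let $W$ be the set of nice vertices, with $|W| = \beta$, and write $W = W_{\ge 3}\cup W_{2}$ (a disjoint union) according to whether the in-degree in $D$ is at least $3$ or exactly $2$. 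If $|W_{\ge 3}| \ge \beta/4$, then the quantity $\alpha$ of Lemma~\ref{stephanone} satisfies $\alpha \ge |W_{\ge 3}| \ge \beta/4$, and Lemma~\ref{stephanone} already yields, in polynomial time, an $r$-out-branching with at least $\alpha/6 \ge \beta/24$ leaves; so this case is settled.

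The core of the proof is therefore the case $|W_{2}| \ge 3\beta/4$. For each $v\in W_{2}$ fix its two in-neighbors $a_v, b_v$, labelled so that $vb_v\notin A(D)$; this is exactly where the hypothesis that $v$ is nice is used, since it guarantees that the arc $b_v v$ does not belong to a digon. The plan is to construct from $D$ an auxiliary rooted digraph $D^{\star}$ by contracting a carefully chosen sub-collection of the arcs $\{\, b_v v : v\in W_{2}\,\}$, selected greedily so that the contracted arcs are pairwise disjoint and sufficiently spread out in $D$. One must then check that $D^{\star}$ is still rooted $2$-connected, so that Lemma~\ref{stephanone} applies to it, and that a constant fraction of $W_{2}$ survives in $D^{\star}$ as vertices of in-degree at least $3$; here the non-digon property of each contracted arc $b_v v$ is precisely what prevents the contraction from merely collapsing a digon and leaving in-degrees unchanged. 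Granting this, Lemma~\ref{stephanone} produces an $r$-out-branching $T^{\star}$ of $D^{\star}$ with $\Omega(\beta)$ leaves; un-contracting each contracted vertex back into $b_v$ and $v$ joined by the tree arc $b_v v$ turns $T^{\star}$ into an $r$-out-branching of $D$ with at least as many leaves, since splitting a vertex into two vertices joined by a tree arc never destroys a leaf. Tracking constants through the greedy selection and the factor $6$ gives the claimed $\beta/24$, and every step runs in polynomial time.

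The main obstacle is the construction and analysis of $D^{\star}$: one has to show that among the arcs $\{b_v v\}_{v\in W_{2}}$ there is a linear-size sub-family whose simultaneous contraction actually raises in-degrees to at least $3$, which requires controlling the overlaps between the in-neighborhoods of distinct nice vertices and excluding degenerate configurations (for instance a common out-neighbor of $b_v$ and $v$, whose two parallel arcs would merge under contraction and thereby \emph{decrease} an in-degree), and then one must certify that contracting such an ``independent'' family preserves rooted $2$-connectivity. An alternative route that avoids $D^{\star}$ altogether is a purely extremal argument on a leaf-maximal $r$-out-branching $T$ with $\ell$ leaves: the vertices of out-degree exactly $1$ in $T$ decompose into maximal directed paths, at most $O(\ell)$ vertices can serve as endpoints of these paths, and using leaf-maximality together with $2$-connectivity one argues that a nice vertex in the interior of such a path admits a re-parenting move (redirect its tree edge to a second in-neighbor that does not lie below it) which strictly increases the number of leaves, a contradiction; hence all but $O(\ell)$ nice vertices lie near the path endpoints and $\beta = O(\ell)$. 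Making the re-parenting move rigorous when the second in-neighbor is itself a leaf of $T$ — which turns the one-shot contradiction into an iteration or potential-function argument — and forcing the hidden constant down to exactly $24$, is the corresponding difficulty in that approach.
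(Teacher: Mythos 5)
First, note that the paper does not prove this statement at all: Lemma~\ref{stephantwo} is imported verbatim from Daligault and Thomass\'e~\cite{abs-0904-2658}, so there is no in-paper proof to compare yours against. Judged on its own terms, your write-up is a plan rather than a proof. The first case is correct and complete: if at least $\beta/4$ nice vertices have in-degree at least $3$, then $\alpha\ge\beta/4$ and Lemma~\ref{stephanone} gives $\alpha/6\ge\beta/24$ leaves. But all the content of the lemma sits in the remaining case, and there every essential step is deferred: you do not exhibit the ``sufficiently spread out'' subfamily of arcs $b_vv$, you do not prove that contracting it preserves rooted $2$-connectedness, and you do not prove that a linear fraction of the merged vertices end up with in-degree at least $3$. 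These are not routine verifications. The arcs $b_vv$ for distinct $v$ may share endpoints; the vertex obtained by contracting $b_vv$ has in-neighborhood $N^-(b_v)\cup\{a_v\}$ (here $v\notin N^-(b_v)$ precisely because the arc is simple), which has size $2$ whenever $b_v$ has in-degree $2$ and $a_v\in N^-(b_v)$, so the contraction need not create any in-degree-$3$ vertex; and contracting an arbitrary arc of a rooted $2$-connected digraph can destroy $2$-connectedness.

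Moreover, the lifting step is wrong as stated, not merely unproven. You claim that un-contracting each merged vertex and inserting the tree arc $b_vv$ turns $T^{\star}$ into an $r$-out-branching of $D$ with at least as many leaves. But $T^{\star}$ may enter the merged vertex through an arc coming from $a_v$, which in $D$ corresponds to the arc $a_vv$ --- an in-arc of $v$, not of $b_v$. After splitting, $v$ is reached from $a_v$ and $b_v$ would have to be reached from $v$; yet $vb_v\notin A(D)$ by your very choice of $b_v$. So the naive expansion does not yield an out-branching. This is exactly the difficulty that the paper's Claim~\ref{claim:contract} avoids in its special setting by using that the contracted arc is a cut arc, so every path to the head passes through the tail; no such guarantee is available here. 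The alternative re-parenting argument has the same status: you yourself identify the step that fails and do not resolve it. As it stands, neither route establishes the lemma; you would need to carry out the contraction analysis in full (essentially reproving the result of \cite{abs-0904-2658}) or simply cite it, as the paper does.
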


\begin{proof}[{\bf Proof of Lemma~\ref{lem:struct}}]
To prove the combinatorial lemma, we consider two cases based on whether or not $D$ is $2$-connected.

\noindent 
{\bf Case 1)} {\sl $D$ is a rooted $2$-connected digraph.} 

We prove this case in the following claim.
\begin{claim}
\label{claim:2connected}
Let $D$ be a rooted $2$-connected digraph with root $r$ and a positive integer $k$. Then in polynomial time, we 
can find an out-branching rooted at $r$ with at least $k$ leaves or find a set $S$ of at most $30k$ vertices
whose removal results in a digraph whose underlying undirected graph has treewidth one.
\end{claim}
\begin{proof}
If $\alpha \geq  6k$, then we are done by Lemma \ref{stephanone}. Similarly if 
$\beta \geq 24k$, then we are done by Lemma \ref{stephantwo}. Hence we assume
that $\alpha < 6k$ and $\beta < 24k$. Let $S$ be the set of nice vertices 
and vertices of in-degree at least $3$ in $G$. Then $|S| < \alpha+\beta \leq 30k$. Observe that 
$D[V(D)\setminus S]$ is simply a collection of directed paths where every edge of the path is a directed $2$-cycle. 
This is because $D[V(D)\setminus S]$ contains only those vertices which are not nice (that is, those vertices whose 
in-neighbors are also out-neighbors) and are of in-degree at most two. 
Hence, if there is an arc $xy$ in $D[V(D)\setminus S]$, then the arc $yx$ also exists in $D[V(D)\setminus S]$. 
Next we note that $D[V(D)\setminus S]$ does not contain a directed cycle of length more than 
two. We prove the last assertion as follows. Let $\cal C$ be a directed cycle in $D[V(D)\setminus S]$ of length at least 
$3$. Since $D$ is a rooted $2$-connected digraph, we have a vertex $v$ on the cycle $\cal C$ such that there is  
a path from $r$ to $v$ without using any other vertex from the cycle $\cal C$. This implies that the in-degree of $v$ is 
at least $3$ in $D$ and hence $v\in S$, contrary to our assumption that $v\notin S$. This proves that 
$D[V(D)\setminus S]$ does not contain a directed cycle of length more than two.
%This is because otherwise this directed cycle cannot be reached from $r$ 
%as no vertex of the cycle has an arc coming from a vertex outside the cycle, as otherwise that vertex will be of in-degree 
%at least $3$. 
Hence the underlying undirected graph $UG(D[V(D)\setminus S])$ is just a collection of paths 
%with every edge being a $2$ cycle 
and hence $\tw(UG(D[V(D)\setminus S]))$ is one.
\end{proof}
\noindent 
{\bf Case 2)} {\sl $D$ is not $2$-connected.}

Since $D$ is not $2$-connected, it has cut vertices, 
those vertices that separate $r$ from some other vertices. 
We deal with the cut vertices in three cases. 
Let $x$ be a cut vertex of $D$.  The three cases we consider are following. 

\noindent 
{\bf Case 2a)} {\sl There exists an arc $xy$ that disconnects at least two vertices from $r$.} 

In this case, we {\em contract the arc $xy$}.  After repeatedly applying Case $2a)$, we obtain a digraph $D'$ such that any arc out of a cut vertex $x$ of $D'$ disconnects at most $1$ vertex. The resulting 
digraph $D'$ is the one mentioned in the Lemma. Since
we have only contracted some arcs iteratively to obtain $D'$, it is clear
that $UG(D')$ also excludes $H$ as a minor. 
The proof that such contraction does not decrease the number of leaves follows from a reduction rule given in~\cite{FernauFLRSV09}. We provide a proof for completion.

\begin{claim}$[\star]$
\label{claim:contract}
Let $D$ be a rooted connected digraph with root $r$, let $xy$ be an arc that disconnects at least two vertices from $r$ and $D'$ be the digraph obtained after contracting the arc $xy$.   
 Then $D$ has an $r$-out-branching with at least $k$ leaves if and only if $D'$ has an $r$-out-branching with at least $k$ leaves. 
%$\ell_s(D / xy) \geq \ell_s(D)$. 
\end{claim}
%\begin{proof}  Let the arc $xy$ disconnect at least two vertices $y$ and $w$ from $r$ and let $D'$ be the digraph obtained from $D$ by contracting the arc $xy$. Let $T$ be an $r$-out-branching of $D$ 
%with at least $k$ leaves. Since every path from $r$ to $w$ contains the arc $xy$, $T$ contains $xy$ as well and neither $x$ nor $y$ is a leaf of $T$. Let $T'$ be the tree obtained from $T$ by contracting $xy$. 
%$T'$ is an $r$-out-branching of $D'$ with at least $k$ leaves.

%For the converse, let $T'$ be an $r$-out-branching of $D'$ with at least $k$ leaves. Let $x'$ be the vertex in $D'$ obtained by contracting the arc $xy$, and let $u$ be the parent of $x'$ in $T'$. 
%Notice that the arc $ux'$ in $T'$ was initially the arc $ux$ before the contraction of $xy$, 
%since there is no path from $r$ to $y$ avoiding $x$ in $D$. We obtain an $r$-out-branching $T$ of $D$ from $T'$,
%by replacing the vertex $x'$ by the vertices $x$ and $y$ and adding the arcs $ux$, $xy$ and arc sets
%$\{yz : x'z \in A(T') \wedge yz \in A(D)\}$ and
%$\{xz : x'z \in A(T') \wedge yz \notin A(D)\}$.
%All these arcs belong to $A(D)$ because all the out-neighbors of $x'$ in $D'$
%are out-neighbors either of $x$ or of $y$ in $D$.
%Finally, $x'$ must be an internal vertex of $T'$ since $x'$ disconnects $w$ from $r$.
%Hence $T$ has at least as many leaves as $T'$.
%\end{proof}
%See \cite{FernauFLRSV09} for a proof. 

Now we handle the remaining cut-vertices of $D'$ as follows.  Let $\cal S$ be the set 
of cut vertices in $D'$. For every vertex $x\in {\cal S}$, we associate {\em a cut-neighborhood} $C(x)$, which 
is the set of out-neighbors of $x$ such that there is no path from $r$ to any vertex in $C(x)$ in 
$D'[V(D')\setminus \{x\}]$. By $C[x]$ we denote $C(x)\cup \{x\}$. 
The following observation is used to handle other cases. 
\begin{claim}
\label{claim:intersect}
 Let $\cal S$ be the set of cut vertices in $D'$. Then for every pair of vertices $x,y\in {\cal S}$ and $x\neq y$,  
 we have that  $C[x]\cap C[y]=\emptyset$. 
\end{claim}
\begin{proof}
To the contrary let us assume that $C[x] \cap C[y] \neq \emptyset $. We note that $C[x]\cap C[y]$ can only have a vertex $v \in \{x,y\}$. To prove this, assume to the contrary 
that we have a vertex $v\in C[x]\cap C[y]$ and $v\notin \{x,y\}$. But then it contradicts 
the fact that $v \in C[x]$, as $x$ doesn't separate $v$ from $r$ due
to the path between $r$ and $v$ through $y$. Thus, either $x\in C(y)$ or 
$y\in C(x)$.  Without loss of generality let $y\in C(x)$. This implies that we have an arc $xy$ and 
there exists a vertex $z\in C(y)$ such that $z\notin C(x)$. But then the arc $xy$ 
disconnects at least two vertices $y$ and $z$ from $r$ and hence Case $2a$ would have applied. This proves the claim. 
\end{proof}

Now we distinguish cases based on cut vertices having cut-neighborhood of size at least $2$ or $1$. Let 
${\cal S}_{\geq 2}$ and ${\cal S}_{=1}$ be the subset of cut-vertices of $D'$ having at least two cut-neighbors and exactly 
one neighbor respectively. 

\noindent
{\bf Case 2b)} {\sl ${\cal S}_{\geq 2} \neq \emptyset$.}
%We now handle the cut vertices of the set ${\cal S}_{\geq 2}$.

We first bound $|{\cal S}_{\geq 2}|$. Let $A_c=\{xy~|~x\in {\cal S}_{\geq 2}, y\in C(x) \}$ be the set of out-arcs emanating from 
the cut vertices in ${\cal S}_{\geq 2}$ to its cut neighbors. We now prove the following structural claim which is useful for bounding the size of ${\cal S}_{\geq 2}$. 
\begin{claim}
\label{claim:allarc}$[\star]$ If $D'$ has an $r$-out-branching  $T'$ with at least $k$ leaves then $D'$ has an $r$-out-branching  $T$ with at least 
$k$ leaves and containing all the arcs of $A_c$, that is, $A_c\subseteq A(T)$. Furthermore such an out-branching can be found in polynomial time. 
\end{claim}
%\begin{proof}
%Let $T^*$ be an $r$-out-branching  of $D'$ with at least $k$ leaves and containing the maximum number of arcs from the set 
%$A_c$. If $A_c\subseteq A(T^*)$, then we are through. So let us assume that there is an arc $e=xy\in A_c$ such that 
%$e\notin A(T^*)$. Notice that since the vertices of ${\cal S}_{\geq 2}$ are cut vertices, they are always internal vertices in any out-branching rooted at $r$ in $D$. In particular, the vertices of ${\cal S}_{\geq 2}$ are internal vertices in $T^*$.  
%Furthermore by Claim~\ref{claim:intersect} we know that $y$ is an end-point of exactly one arc in $A_c$.  Let $z$ be the 
%parent of $y$ in $T^*$. Now obtain $T^*_{e}=T^*\setminus \{zy\} \cup \{xy\} $.  Observe that $T^*_{e}$ contains at least 
%$k$ leaves and has more arcs from $A_c$ than $T^*$. This is contrary to our assumption that $T^*$ is an $r$-out-branching  of $D'$ with at least $k$ leaves and containing the maximum number of arcs from the set $A_c$. This proves 
%that $D'$ has an $r$-out-branching  $T$ with at least $k$ leaves and containing all the arcs of $A_c$. 

%Observe that starting from any $r$-out-branching  $T'$ of $D'$ we can obtain the desired $T$ in polynomial time 
%by simple arc exchange operations described in the previous paragraph.  %This concludes the claim. 
%\end{proof}

We know that in any out-tree, the number of internal vertices of out-degree at 
least $2$ is bounded by the number of leaves. Hence if 
 $|{\cal S}_{\geq 2}|\geq k$ then we obtain an $r$-out-branching  $T$ of $D'$  with at least $k$ leaves using 
 Claim~\ref{claim:allarc} and we are done. So from now onwards we assume that  $|{\cal S}_{\geq 2}|=\ell \leq k-1$.  

We now do a transformation to the given digraph $D'$. For every vertex $x\in {\cal S}_{\geq 2}$, we introduce an {\em imaginary} vertex $x^i$ and add an arc $ux^i$ if there is an arc $ux\in A(D')$ and add an arc $x^iv$ if there is an arc 
 $xv\in A(D')$. Basically we duplicate the vertices in  ${\cal S}_{\geq 2}$. Let the transformed graph be called 
 $D^{dup}$. We have the following two properties about $D^{dup}$. First, no vertex in ${\cal S}_{\geq 2}\cup \{x^{i}|x\in{\cal S}_{\geq 2}\} $ is a cut vertex in $D^{dup}$. We sum up the second property in the following claim. 
 \begin{claim}
 \label{claim:dup}
The digraph $D'$ has an $r$-out-branching  $T$ with at least $k$ leaves if and only if $D^{dup}$ has an $r$-out-branching  $T'$ 
with at least $k+\ell$ leaves. 
 \end{claim}
 \begin{proof}
 Given an $r$-out-branching $T$ of $D'$ with at least $k$ leaves, we obtain an out-branching $T'$ of $D^{dup}$ 
 with at least $k+\ell$ leaves by adding an arc $xx^i$ to $T$ for every $x\in {\cal S}_{\geq 2}$. Since every vertex of 
 ${\cal S}_{\geq 2}$ is an internal vertex in $T$, this process only adds $\{x^i~|~x \in {\cal S}_{\geq 2}\}$ 
 as leaves and hence we have at least $k+\ell$ leaves in $T'$. 
  
For the converse, assume that $D^{dup}$ has an $r$-out-branching  $T'$  with at least $k+\ell$ leaves.  
First, we modify the out-branching so that not both of $x$ and $x^i$ 
are internal vertices and we do not lose any leaf. This can be done easily by making
all out arcs in the out-branching from $x$ and making $x^i$ a leaf. That is, if $N_{T'}^{+}(x^i)$ is the set of out-neighbors of 
$x^i$ in $T'$ then we delete the arcs $x^iz$, $z\in N_{T'}^{+}(x^i)$ and add $xz$ for all $z\in N_{T'}^{+}(x^i)$. 
This process can not decrease the number of leaves. Furthermore we can always assume 
that if exactly one of $x$ and $x^i$ is an internal vertex, then $x$ is the internal vertex in $T'$. Now delete 
all the vertices of $\{x^i~|~x \in {\cal S}_{\geq 2}\}$ from $T'$ and obtain $T$.  Since the vertices in the set 
$\{x^i~|~x \in {\cal S}_{\geq 2}\}$ are leaves of $T'$, we have that $T$ is an $r$-out-branching in $D'$. Since in the whole 
process we have only deleted $\ell$ vertices we have that $T$ has at least $k$ leaves. 
 \end{proof}
Now we move on to the last case. 

\noindent
{\bf Case 2c)} ${\cal S}_{=1} \neq \emptyset$. 

Consider the arc set 
 $A_{p}=\{xy~|~x\in {\cal S}_{=1}, y\in C(x)\}$. Observe that $A_p\subseteq A(D')\subseteq A(D^{dup})$ and 
 $A_p$ forms a {\em matching} in $D^{dup}$ because of Claim~\ref{claim:intersect}. Let $D^{dup}_{c}$ be the 
 digraph obtained from $D^{dup}$ by contracting the arcs of $A_p$.  That is, for every arc $uv\in A_p$, 
 the contracted graph is obtained by identifying the vertices $u$ and $v$ as $uv$ and removing all the loops and duplicate arcs.
 \begin{claim}
 \label{claim:dupc}
 Let  $D^{dup}_{c}$ be the digraph obtained by contracting the arcs of $A_p$ in  $D^{dup}$. Then the following holds. 
% (a) The digraph $D^{dup}_{c}$ is $2$-connected; and (b) If $D^{dup}_{c}$ has an $r$-out-branching $T$ with at least $k+\ell$ leaves then $D^{dup}$ has an $r$-out-branching 
% with at least $k+\ell$ leaves.  
 \begin{enumerate}
 \setlength{\itemsep}{-2pt}
 \item The digraph $D^{dup}_{c}$ is $2$-connected;
 \item If $D^{dup}_{c}$ has an $r$-out-branching $T$ with at least $k+\ell$ leaves then $D^{dup}$ has an $r$-out-branching 
 with at least $k+\ell$ leaves.  
 \end{enumerate}
 \end{claim}
 \begin{proof}
 The digraph $D^{dup}_{c}$ is $2$-connected by the construction as we have iteratively removed all cut-vertices. If $D^{dup}_{c}$ has an $r$-out-branching $T$ with at least $k+\ell$ leaves then we can obtain a $r$-out-branching 
 with at least $k+\ell$ leaves for $D^{dup}$ by expanding  each of the contracted vertices to arcs in $A_p$. 
  \end{proof}
We are now ready to combine the above claims to complete the proof of the lemma. We first apply 
Claim~\ref{claim:2connected} on $D^{dup}_{c}$ with $k+\ell$. Either we get an $r$-out-branching $T'$ with at least 
$k+\ell$ leaves or a set $S'$ of size at most $30(k+\ell)$  such that $\tw(UG(D^{dup}_c[V(D^{dup}_c)\setminus S]))$ is 
one. In the first case, by Claims~\ref{claim:dup} and \ref{claim:dupc} we get an $r$-out-branching $T$ with at least $k$ 
leaves in $D'$. In the second case we know that there is a vertex set $S'$ of size at most $30(k+\ell)$  such that
 $\tw(UG(D^{dup}_c[V(D^{dup}_c)\setminus S']))$ is one. Let $S^*=\{u~|~uv\in S', vu \in S', u\in S'\}$ be the set of vertices 
 obtained from $S'$ by expanding the contracted vertices in $S'$. Clearly the size of 
 $|S^*|\leq  2|S'|\leq 60 (k+\ell)\leq 120 k =\cO(k). $ 
 We now show that the treewidth of the underlying undirected graph of $D^{dup}[V(D^{dup})\setminus S^*]$ is at most
  $3$. This follows from the observation that  $\tw(UG(D^{dup}_c[V(D^{dup}_c)\setminus S']))$ is one. Hence given a
  tree-decomposition of width one for $UG(D^{dup}_c[V(D^{dup}_c)\setminus S'])$ we can obtain a tree-decomposition 
  for $UG(D^{dup}[V(D^{dup})\setminus S^*])$ by expanding the contracted vertices. This can only double the bag 
  size and hence the treewidth of $UG(D^{dup}[V(D^{dup})\setminus S^*])$ is at most $3$, as the bag size can at most be 
  $4$. Now we take $S=S^*\cap V(D')$ and since $V(D')\subseteq V(D^{dup})$, we have that 
  $tw(UG(D[V(D)\setminus S]))\leq 3$. 
%Furthermore, since in the whole process the only operation we did to the input digraph was contraction in the first case, and hence if $UG(D)$ excludes a fixed graph $H$ as a minor then so does 
%  the underlying undirected graph of the contracted digraph. 
This concludes the proof of the lemma. 
%\hfill$\Box$
\end{proof}

% 
% This proves Theorem \ref{mainthm}.
%\section{Method II - Kernelization and One to Many Reduction to Graphs of Bounded Treewidth}
\section{Method II - Kernelization and Divide \& Conquer}
In this section we exhibit our second method of designing subexponential time algorithms for 
digraph problems through the {\sc $k$-Internal Out-Branching} problem. In this method 
we utilize the known polynomial kernel for the problem and obtain a collection of 
$2^{o(k)}$ instances such that the input instance is an ``yes'' instance if and only if one of the instances in our collection is. 
The property of the instances in the collection which we make use of is that the treewidth of the underlying 
undirected graph of these instances is $o(k)$. The last property brings dynamic programming on graphs of bounded 
treewidth into picture as the final step of the algorithm. 

Here, we will solve a rooted version of the \IOB{} problem, called  
{\sc Rooted $k$-Internal Out-Branching (\rIOB{})}, where apart from $D$ and 
$k$ we are also given a root $r\in V(D)$, and the objective is to find an $r$-out-branching, if exists one, with at least $k$ internal vertices. 
The \IOB{} problem can be reduced to \rIOB{} by guessing the root $r$ at the additional cost of $|V(D)|$ in the running time of 
the \rIOB{} problem. Henceforth, we will only consider \rIOB{}. We call an 
$r$-out-tree $T$ with $k$ internal vertices {\em minimal} if deleting any leaf results in an $r$-out-tree with at most $k-1$ internal vertices. 
A well known result relating minimal $r$-out-tree $T$ with $k$ internal vertices with a solution to \rIOB{} is as follows. 
\begin{lemma}[\cite{CohenFGKSY09}]
\label{lem:iobminimal}
Let $D$ be a rooted connected digraph with root $r$. % $r\in V(D)$ such that ${\cal R}(V)=V(D)$. 
 Then $D$ has an $r$-out-branching $T'$ 
with at least $k$ internal vertices if and only if $D$ has a minimal $r$-out-tree $T$ with $k$ internal vertices 
with $|V(T)|\leq 2k-1$.  Furthermore, given a minimal $r$-out-tree $T$, we can find an $r$-out-branching $T'$ with at least $k$ internal vertices in 
polynomial time. 
\end{lemma}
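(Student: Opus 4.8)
The plan is to prove the two directions of the equivalence separately, with the reverse direction being the substantive one. For the forward direction, suppose $D$ has an $r$-out-branching $T'$ with at least $k$ internal vertices. I would start peeling off leaves of $T'$ one at a time: as long as the current $r$-out-tree has strictly more than $k$ internal vertices, or has exactly $k$ internal vertices but removing some leaf still leaves $k$ internal vertices, delete such a leaf. This process terminates, and the resulting $r$-out-tree $T$ has exactly $k$ internal vertices and is minimal in the sense defined (deleting any leaf drops the internal count below $k$). The size bound $|V(T)| \le 2k-1$ then follows from a counting argument: in a minimal $r$-out-tree with $k$ internal vertices, every leaf must be the \emph{only} child of its parent that is a leaf — otherwise that parent would remain internal after deleting the leaf, contradicting minimality — so the number of leaves is at most the number of internal vertices that have a leaf-child, which is at most $k$. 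Hence $|V(T)| \le k + k = 2k$; a slightly sharper look at the root (the root is internal and contributes, and at least one internal vertex, e.g. a deepest one, has only leaf-children) gives $2k-1$.

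For the reverse direction, suppose $D$ has a minimal $r$-out-tree $T$ with $k$ internal vertices and $|V(T)| \le 2k-1$. Since $D$ is connected (every vertex reachable from $r$), I would extend $T$ to a spanning out-tree of $D$: repeatedly, while there is a vertex $v \notin V(T)$, pick one that has an in-neighbor $u \in V(T)$ (such a $v$ exists by reachability from $r$ through $V(T)$), and add the arc $uv$. This yields an $r$-out-branching $T'$ with $V(T') = V(D)$. Every vertex that was internal in $T$ is still internal in $T'$ (it still has all its $T$-children, plus possibly new ones), so $T'$ has at least $k$ internal vertices. This also shows the construction is polynomial: peeling leaves and then greedily attaching the remaining vertices both run in polynomial time.

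The last sentence of the lemma — that given a minimal $r$-out-tree $T$ we can find the $r$-out-branching in polynomial time — is exactly the extension procedure just described, so it needs no separate argument. I expect the main obstacle to be getting the size bound $|V(T)| \le 2k-1$ exactly right rather than just $2k$: one has to argue carefully about leaves sharing parents and about the root, to shave off the additive constant. The inequality "number of leaves $\le k$" is the crux, and it rests precisely on the minimality condition: if two leaves had the same parent $p$, deleting one of them leaves $p$ internal (it still has the other leaf-child or some other child), so $T$ minus that leaf still has $k$ internal vertices, contradicting minimality. Thus each internal vertex is the parent of at most one leaf, giving at most $k$ leaves, and combined with $k$ internal vertices and the observation that the root is internal, one gets $|V(T)| \le 2k - 1$. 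None of this requires anything beyond elementary tree combinatorics and the connectivity hypothesis, so I would present it concisely.
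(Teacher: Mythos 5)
The paper cites this lemma from Cohen et al.~\cite{CohenFGKSY09} without supplying a proof, so there is no in-paper argument to compare against. Your plan is the expected one, and both directions are essentially sound: peel leaves from $T'$ until minimality is reached, use minimality to bound the number of leaves, and in the reverse direction extend the minimal out-tree $T$ to a spanning out-branching by repeatedly attaching an unreached vertex that has an in-neighbor in the current tree (such a vertex exists since $D$ is rooted connected), which never turns an internal vertex into a leaf. The polynomial-time claim follows immediately.

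The one place your argument does not go through as written is the shave from $2k$ to $2k-1$. You correctly note that minimality forces each internal vertex to have at most one leaf child, giving at most $k$ leaves and $|V(T)|\leq 2k$. But minimality in fact forces the stronger fact you should use: if $p$ has a leaf child $\ell$ and any other child at all, deleting $\ell$ leaves $p$ internal and the count stays at $k$, contradicting minimality; so any internal vertex with a leaf child has that leaf as its \emph{unique} child. Now suppose every one of the $k$ internal vertices has a leaf child. Then every internal vertex has exactly one child, a leaf, so there is no arc from one internal vertex to another; but any non-root internal vertex has an internal parent, and that parent then has a non-leaf child, a contradiction unless $k=1$. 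Hence for $k\geq 2$ some internal vertex has no leaf child, so there are at most $k-1$ leaves and $|V(T)|\leq 2k-1$. Your hint --- that a deepest internal vertex has only leaf children --- points in the wrong direction: to save the extra $1$ you need an internal vertex with \emph{no} leaf children, and a deepest internal vertex is precisely the opposite kind of witness. (For $k=1$ the stated bound $2k-1=1$ is literally false for any out-tree with an internal vertex, but this is an artifact of the cited statement; the paper only uses $|V(T)|=\cO(k)$, so nothing is affected.)
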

% CAN WE FIND $T'$ from $T$ IN POLYNOMIAL TIME? SHOULDN't THERE A MENTION
% of IT IN THE LEMMA? 

We also need another known result about kernelization for \IOB{}. 
\begin{lemma}[\cite{abs-0801-1979}]
\label{lem:iobkernel}
{\sc $k$-Internal Out-Branching} admits a polynomial kernel of size $8k^2+6k$. 
\end{lemma}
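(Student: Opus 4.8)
The plan is to follow the crown-decomposition route to polynomial kernelization, which is the standard machinery behind quadratic kernels for ``maximum internal spanning tree''-type problems. I would apply the reduction rules directly to the pair $(D,k)$, so that it is immaterial whether we phrase things for \IOB{} or \rIOB{}: the rules never refer to a root and they preserve the property ``$D$ has an out-branching with at least $k$ internal vertices''. As a preliminary step, in polynomial time I would check whether $D$ admits an out-branching at all, that is, whether the condensation of $D$ has a unique source strongly connected component; if it does not, output a fixed no-instance, and otherwise proceed assuming $D$ has an out-branching. Recall from Lemma~\ref{lem:iobminimal} that a yes-certificate may always be taken to be a minimal $r$-out-tree on at most $2k-1$ vertices; this ``locality'' of having $k$ internal vertices is what makes a kernel of size polynomial in $k$ plausible in the first place.

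The preprocessing consists of a handful of clean-up rules together with one crown rule that does the real work. Every sink (a vertex of out-degree $0$) is a leaf in every out-branching, and so, in effect, are certain non-sink vertices whose entire out-neighbourhood is already ``served'' by competitors: for instance, if $r$ has out-neighbours $a_1,\dots,a_m$, each of which has the single out-neighbour $b$, then $m-1$ of the $a_i$ are leaves in every out-branching even though none of them is a sink, so the rule must detect this and not merely look at sinks. I would collect such \emph{leaf-like} vertices into a set $L$, keep the remaining (``potentially useful'') vertices in $U$, form the bipartite graph $B$ between $L$ and $U$ whose edges are the arcs of $D$ that could realise a parent--child relation, and apply the crown-decomposition lemma (a Hall/K\"onig-type argument): once $|L|$ exceeds a fixed multiple of $|U|$, $B$ contains a crown $(C,H,R)$ with $C\subseteq L$, $H=N_B(C)\subseteq U$, and a matching $M$ saturating $H$ into $C$. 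The reduction deletes the unmatched crown vertices $C\setminus V(M)$ (or contracts them in the cases where they are not literally sinks). One direction of the resulting equivalence is monotonicity; the other uses the saturating matching $M$ to re-route an arbitrary out-branching of $D$ so that every vertex of $H$ keeps a private child inside $C$, after which deleting the unmatched crown vertices costs no internal vertex and keeps the out-branching spanning and rooted.

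Exhaustive application of these rules runs in polynomial time and yields an equivalent irreducible instance, and the last ingredient is a counting dichotomy on it: either the irreducible instance already has at most $8k^2+6k$ vertices, in which case it is the desired kernel, or its structure --- a bounded number of ``useful'' vertices, each served by a controlled number of leaf-like ones, whose in-neighbourhoods irreducibility forces to be spread out --- lets one read off an out-branching with at least $k$ internal vertices, in which case we output a trivial yes-instance. The precise constants $8$ and $6$ are pure bookkeeping in this count, and I would not expect the bound to be tight: the undirected version is known to admit even a linear, $3k$-vertex kernel~\cite{FominGST09}, so the quadratic here reflects this particular crown analysis rather than an inherent barrier.

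I expect the main obstacle to be the safety proof of the crown reduction. The naive vertex-by-vertex re-routing can turn the former parent of a deleted crown vertex into a new leaf, so one must use the saturating matching $M$ globally --- essentially a simultaneous family of augmenting paths --- to guarantee that the number of internal vertices never drops while the out-branching stays spanning and rooted. A closely related difficulty is formulating the notion of a ``leaf-like'' non-sink vertex correctly: broad enough that the reduced instance is genuinely small, yet restrictive enough that the crown reduction remains safe. These two points, rather than the routine graph surgery, are where the real work lies.
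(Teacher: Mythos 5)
This lemma is not proved in the paper at all: it is imported verbatim from Gutin, Kim and Razgon~\cite{abs-0801-1979}, and the only facts the authors subsequently use about the kernelization are the size bound and the property that the algorithm has a single reduction rule which only \emph{deletes} vertices (so that it preserves any minor-closed class of underlying undirected graphs). Your choice of the crown-decomposition route is the right one in the sense that it is exactly how the cited source obtains the quadratic kernel --- the introduction of this paper says as much --- so there is no disagreement about method. The problem is that what you have written is a plan rather than a proof, and the two places where you yourself locate ``the real work'' are precisely the places left open.

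Concretely: (1) the set $L$ of ``leaf-like'' vertices is never defined --- you give one motivating example (the $m$ in-twins of a single vertex $b$) but no testable predicate, and without that definition neither the crown rule's safety nor the final count can even be stated; (2) the safety of deleting the unmatched crown vertices is asserted to follow from ``using the matching $M$ globally,'' but the actual difficulties --- that a deleted vertex may be the only access route to some other vertex, so the out-branching ceases to be spanning, or that re-routing a child away from its old parent turns that parent into a new leaf --- are named as obstacles and then not resolved; and (3) the bound $8k^2+6k$ is dismissed as ``pure bookkeeping,'' yet no counting is carried out, so the lemma with these specific constants is not established. A smaller point: your rule sometimes \emph{contracts} rather than deletes; contraction would still preserve minor-closedness, so the paper's later use of the lemma would survive, but it does not match the deletion-only property the paper explicitly relies on when arguing that the kernel stays in the class $\mathscr{G}$. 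In short: right method, matching the source, but the proof itself is not there.
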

In fact, the kernelization algorithm presented in~\cite{abs-0801-1979} works 
for all digraphs and has a unique reduction rule which only {\em deletes 
vertices}. This implies that if we start with a graph $G \in \mathscr G$ where  $\mathscr G$ excludes a fixed graph $H$ as a minor, then the graph $G'$ obtained after applying kernelization algorithm still belongs to $\mathscr G$. 

Our algorithm tries to find a minimal $r$-out-tree $T$ with $k$ internal vertices with $|V(T)|\leq 2k-1$ 
recursively. As the first step of the algorithm we obtain a set of 
$2^{o(k)}$ digraphs such that the underlying undirected graphs have
treewidth $\cO(\sqrt k)$, and the original problem is a ``yes'' instance if and only
at least one of the $2^{o(k)}$ instances is a ``yes'' instance.
%$${\cal C}=\Big\{(D_i,k',r)~|~D_i \mbox{ is a subgraph of }D, r\in V(D), k'\leq k, 1\leq i\leq {8k^2+6k \choose \sqrt k}\Big\}, $$
%such that $\tw(UG(D_i))\leq \cO(\sqrt k)$ for all $i$.  And there exists $r\in V(D)$ such that 
%$(D,k)$ has a minimal $r$-out-tree $T$ with $k$ internal vertices if and only if there exists an $i$  and $k'\leq k$ 
%such that $(D_i,k',r)$ has a minimal $r$-out-tree $T$ with $k'$ internal vertices. 
More formally, we prove the following lemma. 

\begin{lemma}
\label{lem:internaltree1}
Let $H$ be a fixed apex graph and ${\mathscr G}$ be a minor closed graph class excluding $H$ as a minor. Let $(D,k)$ be an instance to 
{\sc $k$-Internal Out-Branching} such that $UG(D)\in {\mathscr G}$. Then 
there exists a collection %$\cal C$ of instances 
$${\cal C}=\Big\{(D_i,k',r)~|~D_i \mbox{ is a subgraph of }D, k'\leq k, r\in V(D), 1\leq i\leq {8k^2+6k \choose \sqrt k}\Big\}, $$
of instances such that $\tw(UG(D_i))= \cO(\sqrt k)$ for all $i$ and $(D,k)$ has 
an out-branching with at least $k$ internal vertices
%has a minimal $r$-out-tree $T$ with $k$ 
%internal vertices 
if and only if there exists an $i$, $r$ and $k'\leq k$ such that $(D_i,k',r)$ has an $r$-out-branching with at least $k'$ internal vertices. 
%has a minimal $r$-out-tree $T$ with $k$ internal vertices. 
\end{lemma}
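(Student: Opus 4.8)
The plan is to prove Lemma~\ref{lem:internaltree1} by combining the polynomial kernel for \IOB{} with a Baker-style layering of the kernelized instance. First I would guess the root $r$, so that it suffices to work with \rIOB{}; this contributes the $r\in V(D)$ in the collection and only a polynomial blow-up. Then I would apply the kernelization algorithm of Lemma~\ref{lem:iobkernel} to obtain an equivalent instance $(D'',k)$ with $|V(D'')|\le 8k^2+6k$ whose underlying undirected graph still lies in $\mathscr G$ (this is where we use that the reduction rule of~\cite{abs-0801-1979} only deletes vertices). The point of kernelizing first is that the whole graph now has only $\cO(k^2)$ vertices, so any ``depth'' or ``layer'' parameter we introduce will be polynomial in $k$, which is what ultimately forces the treewidth down to $\cO(\sqrt k)$ rather than merely $o(n)$.

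Next, by Lemma~\ref{lem:iobminimal}, it is enough to look for a minimal $r$-out-tree $T$ with $k$ internal vertices and $|V(T)|\le 2k-1$. The idea is that such a tree lives inside a BFS-layered decomposition of $UG(D'')$ rooted at $r$: partition $V(D'')$ into BFS levels $L_0=\{r\},L_1,L_2,\dots$ according to distance from $r$ in $UG(D'')$. Since $|V(D'')|=\cO(k^2)$ and, in an apex-minor-free (hence sparse) graph, bounded-diameter subgraphs have bounded treewidth, a sequence of roughly $\sqrt k$ consecutive BFS levels induces a subgraph of treewidth $\cO(\sqrt k)$. I would then argue: in any BFS layering, a solution tree $T$ of size $\le 2k-1$ can be ``cut'' so that it fits within a union of $\cO(\sqrt k)$ consecutive levels after removing a sparse set of levels — concretely, one picks, among $\sqrt k$ shifted families of ``every $\sqrt k$-th level,'' a family hit by few internal vertices of $T$, deletes those levels, and the remaining connected chunks each span $\cO(\sqrt k)$ levels. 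Guessing which of the $\sqrt k$ shifts to use, and guessing the (at most $\sqrt k$) boundary vertices of $T$ on the deleted levels — there are at most $\binom{8k^2+6k}{\sqrt k}$ such guesses — yields the claimed collection $\cal C$; each $D_i$ is the subgraph induced by the retained levels (with the guessed boundary handling), so $\tw(UG(D_i))=\cO(\sqrt k)$, and $(D,k)$ is a yes-instance iff some $(D_i,k',r)$ is, for appropriate $k'\le k$ accounting for internal vertices already committed on the boundary.

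Concretely the steps, in order, are: (1) guess $r$; (2) kernelize via Lemma~\ref{lem:iobkernel} to get $(D'',k)$ with $UG(D'')\in\mathscr G$ and $|V(D'')|=\cO(k^2)$; (3) fix a BFS layering of $UG(D'')$ from $r$ and, for each of the $\sqrt k$ residues $j\in\{0,\dots,\sqrt k-1\}$, consider deleting the levels $\equiv j \pmod{\sqrt k}$; (4) observe via a pigeonhole argument that for a minimal solution tree $T$ with $\le 2k-1$ vertices, some residue $j$ has the set of deleted levels meeting $T$ in $\cO(\sqrt k)$ vertices, which are exactly the vertices of $T$ we guess (there are at most $\binom{|V(D'')|}{\cO(\sqrt k)}=\binom{8k^2+6k}{\sqrt k}$ choices, possibly after adjusting constants); (5) for each such guess form $D_i$ from the retained levels together with the guessed cut-vertices, note $\tw(UG(D_i))=\cO(\sqrt k)$ because it spans $\cO(\sqrt k)$ consecutive BFS levels of an $H$-minor-free graph (using the standard fact that in a graph excluding a fixed minor, the subgraph induced by $t$ consecutive BFS levels has treewidth $\cO(t)$, combined with the apex condition to get the clean linear bound); (6) set $k'$ to be $k$ minus the number of guessed vertices forced to be internal, and verify the equivalence of yes-instances.

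The main obstacle I expect is step (5) together with the correctness direction of step (4): making precise the sense in which a minimal $r$-out-tree ``respects'' the BFS layering so that deleting a sparse set of levels leaves it decomposed into low-treewidth pieces, and simultaneously ensuring the guessed boundary vertices can be re-glued into a genuine $r$-out-tree with the right internal-vertex count. One has to be careful that $T$ is \emph{directed} and rooted, so the BFS layering of the \emph{underlying} graph need not align with the orientation of $T$; the fix is that $T$ being an $r$-out-tree of bounded size still has all its vertices within BFS-distance $\le 2k$ of $r$, and the cut argument only needs the undirected structure of $UG(D'')$, while the directions are carried along verbatically into each $D_i$. Handling the bookkeeping so that the number of instances is exactly $\binom{8k^2+6k}{\sqrt k}$ (rather than that times a $2^{\cO(\sqrt k)}$ factor for the residue guess and the internal/leaf status of boundary vertices) may require folding those extra $2^{\cO(\sqrt k)}$ choices into the binomial coefficient via a crude bound, which is routine once the structural claim is in place.
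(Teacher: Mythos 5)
Your proposal follows essentially the same route as the paper's proof: kernelize to $\cO(k^2)$ vertices using the vertex-deletion-only kernel, BFS-layer from the guessed root, pigeonhole the $\le 2k-1$-vertex minimal $r$-out-tree (Lemma~\ref{lem:iobminimal}) over the $\cO(\sqrt k)$ shifted layer classes to find one meeting the tree in $\cO(\sqrt k)$ vertices, guess that intersection $Z$ and keep it while deleting the rest of the class, and bound the treewidth of the remainder via the linear diameter--treewidth bound for apex-minor-free graphs plus adding $Z$ to every bag. The correctness and counting caveats you raise (directions carried verbatim, folding the shift and root guesses into the binomial bound) are handled the same way, and as loosely, in the paper itself.
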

\begin{proof}
The idea of the proof is to do Baker style layering technique~\cite{Baker94} combined with kernelization. 
% and an application of pigeon hole principle. 
In the first step we apply the kernelization algorithm given by 
Lemma~\ref{lem:iobkernel} on $(D,k)$ and obtain an equivalent instance $(D',k')$ where $|D'|\leq 8k^2+6k$ and $k'\leq k$ 
for \IOB{}. From now onwards we will confine ourselves to $(D',k')$. Observe that since the kernelization algorithm only 
deletes vertices to obtain the reduced instance from the input digraph, we have that $UG(D')\in {\mathscr G}$. 

Now we reduce the \IOB{} problem to the \rIOB{} problem by guessing a vertex $r\in V(D')$ as a root. Furthermore we try to 
find a minimal $r$-out-tree $T$ with $k'$ internal vertices with $|V(T)|\leq 2k'-1$.  This suffices for our purpose if we know 
that every vertex in $V(D')$ is reachable from the root $r$, as in this case 
Lemma~\ref{lem:iobminimal} is applicable. 

We start with a BFS starting at the vertex $r$ in $UG(D')$. Let the layers created by
doing BFS on $r$ be $L_0^r,L_1^r,\ldots,L_t^r$. If $t \leq \lceil{\sqrt k }\rceil$, then the collection ${\cal C}_r$ consists of $(D',k',r)$.  
For $t\leq \sqrt{k}$, the fact that $\tw(D')=\cO(\sqrt{k})$ follows from the comments later in the proof. Hence from now onwards we assume that $t>\lceil{\sqrt k }\rceil$. Now we partition the
vertex set into $(\lceil \sqrt k \rceil)+1$ parts where the $q$-th part contains
all vertices which are at a distance of $q+i(\lceil{\sqrt k}\rceil)$ from $r$ for various values of $i$. That is, 
let $V(D')=\cup_{q}P_q$, $q\in \{0,\ldots, \lceil{\sqrt k }\rceil\}$. We define
$ P_q=\bigcup L_{q+i(\lceil{\sqrt k }\rceil+1)}^r,~ i \in \left\{0,\ldots,
\left\lfloor \frac{t-\sqrt k}{\lceil{\sqrt k }\rceil+1}\right\rfloor \right\}. $
%$ T_q=\bigcup W_{q+i(2k+2)}\hspace{1cm} i \in \left\{0,\ldots,
%\left\lfloor \frac{s+1}{2k+2}\right\rfloor \right\}. $
It is clear from the definition of $P_q$ that it partitions the vertex set $V(D')$. 
If the input is an ``yes''
 instance then there exists a partition $P_a$ such that it contains at most $\Big\lceil \frac{2k'-1}{\lceil \sqrt k \rceil}\Big\rceil\leq 2 \sqrt k$  
vertices of the minimal $r$-out-tree $T$ we are seeking for. We guess the partition $P_a$ and obtain the collection 
$${\cal C}_r(P_a)=\left\{(D'[V(D')\setminus P_a\cup Z],k',r)~\Big|~Z\subseteq P_a, |Z|\leq 2\sqrt k\right\}.$$
We now claim that for every $Z\subseteq P_a, |Z|\leq 2\sqrt k$, $\tw(UG(D'[V(D')\setminus P_a\cup Z])) = \cO(\sqrt k)$. 
Let $V'=V(D')\setminus P_a$ be the set of vertices after removal of $P_a$ from the vertex set of $D'$. Let the resultant underlying undirected graph be $G'=UG(D'[V'])$ 
with connected components $C_1,\ldots,C_\ell$. We show that each connected component $C_i$ of $G'$ has $\cO(\sqrt k)$ treewidth.  
More precisely, every connected component $C_i$ of $G'$ is a subset of at most $\lceil \sqrt k \rceil + 1$ consecutive layers of the BFS starting at $r$. If we start with $UG(D')$, and delete all BFS layers after these layers and contract all BFS layers before these layers into a single vertex $v$, we obtain a minor $M$ of $UG(D')$. This minor $M$ has diameter at most $\lceil \sqrt k \rceil +2$ and contains $C_i$ as an induced subgraph. Since $UG(D')\in{\mathscr G'}$, we have that $M\in {\mathscr G}$. Furthermore, Demaine and Hajiaghayi~\cite{DemaineH04} have shown that for any fixed apex graph $H$, every $H$-minor-free graph 
of diameter $d$ has treewidth $\cO(d)$. This implies that the $\tw(C_i)\leq \tw(M)\leq \cO(\sqrt k)$. Notice that since every connected component of $G'$ has treewidth $\cO(\sqrt k)$, $G'$ 
itself has $\cO(\sqrt k)$ treewidth. Given a tree-decomposition of width $\cO(\sqrt k)$ for $G'$, we can obtain a tree-decomposition of width $\cO(\sqrt k)$ for $UG(D'[V(D')\setminus P_a\cup Z])$ by adding $Z$ to 
every bag. The collection ${\cal C}_r$  is given by $\cup_{a=0}^{\lceil \sqrt k \rceil} {\cal C}_r(P_a)$. Finally the collection 
${\cal C}=\cup_{r\in V(D')}{\cal C}_r$.

By the pigeon hole principle we know that if $(D',k')$ is an yes instance then there exists a $P_a$ containing at most $2\sqrt k$ vertices of the minimal tree $T$ we are looking for. Since we have run through all $r\in V(D')$ as a potential 
root as well as all subsets of size at most $2\sqrt k$ as the possible intersection of $V(T)$ with $P_a$, we know that
$(D',k')$ has 
an out-branching with at least $k$ internal vertices
%has a minimal $r$-out-tree $T$ with $k$ 
%internal vertices 
if and only if there exists an $i$, $r$ and $k'\leq k$ such that $(D_i,k',r)\in {\cal C}$ has a $r$-out-branching with at least $k'$ internal vertices. This concludes the proof of the lemma.
\end{proof}

Given a tree decomposition of width $w$ for $UG(D)$, one can solve \rIOB{} in time $2^{\cO(w\log w)} n$ using a 
dynamic programming over graphs of bounded treewidth as described 
in~\cite{abs-0801-1979}. 
%This algorithm runs in time $2^{\cO(w\log w)} n$ on a tree decomposition of width $w$.
% is the width of the tree decomposition. Now we are ready to prove the main theorem of this section 
%assuming the combinatorial Lemma \ref{lem:struct}. 
%Finally, with the same dynamic programming algorithm for \rIOB{} on digraphs as in~\cite{abs-0801-1979} 
 %such that the treewidth 
%of the underlying undirected graph is bounded.  
%\begin{lemma}
%\label{lem:dpriob}
%Let $(D,r,k)$ be an instance of \rIOB{} such that $\tw(UG(D))\leq t$. Then we can find an $r$-out-branching 
%with maximum number of internal vertices in time $t^{\cO(t)} \cdot |V(D)|^{\cO(1)}$. 
%\end{lemma}
%\begin{proof}
%Frederic.
%\end{proof}
This brings us to the main theorem of this section. 

\begin{theorem}
The \IOB{} problem can be solved in time $2^{\cO(\sqrt k \log k)} + n^{\cO(1)}$  
 on digraphs with $n$ vertices such that the underlying undirected graph excludes a fixed apex graph $H$ as a minor. 
\end{theorem}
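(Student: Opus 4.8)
The plan is to assemble the pieces already developed in this section into a single algorithm, exactly mirroring the structure of the proof of Theorem~\ref{thm:LOB}. First I would reduce \IOB{} to \rIOB{} by guessing the root $r\in V(D)$; this costs a factor of $n$ in the running time. Given the rooted instance $(D,k,r)$, I would invoke Lemma~\ref{lem:internaltree1}: since the underlying undirected graph $UG(D)$ excludes a fixed apex graph $H$ as a minor and hence lies in some minor-closed class $\mathscr G$ excluding $H$, the lemma produces a collection $\cal C$ of at most ${8k^2+6k\choose \sqrt k}=2^{\cO(\sqrt k \log k)}$ instances $(D_i,k',r)$, each with $\tw(UG(D_i))=\cO(\sqrt k)$, such that the original instance is a ``yes''-instance if and only if at least one member of $\cal C$ admits an $r$-out-branching with at least $k'$ internal vertices. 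Note that the kernelization step inside Lemma~\ref{lem:internaltree1} (Lemma~\ref{lem:iobkernel}) already shrinks the instance to size $\cO(k^2)$ before the layering, so the enumeration over roots inside that lemma is over a kernel, not over the original $n$-vertex digraph; the single outer guess of $r$ needed to phrase things as \rIOB{} is the only place an honest factor of $n$ enters.

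Next I would dispatch each instance in $\cal C$ by dynamic programming over a tree decomposition. Using the constant-factor treewidth approximation of Demaine et al.~\cite{DemaineHK05} for $H$-minor-free graphs, I compute in polynomial time a tree decomposition of width $\cO(\sqrt k)$ for each $UG(D_i)$. Then, by the dynamic programming procedure for \rIOB{} over bounded-treewidth graphs described in~\cite{abs-0801-1979}, each instance is solved in time $2^{\cO(w\log w)}|V(D_i)| = 2^{\cO(\sqrt k\log k)}\cdot\cO(k^2)$. Multiplying by the $2^{\cO(\sqrt k\log k)}$ instances in $\cal C$ and by the polynomial-size kernel still gives $2^{\cO(\sqrt k \log k)}$ total work for the search part, plus the $n^{\cO(1)}$ cost of the kernelization and the root guess; here I should be a little careful to confirm that the two $2^{\cO(\sqrt k\log k)}$ factors combine into a single $2^{\cO(\sqrt k\log k)}$, which they do since $\sqrt k\log k + \sqrt k\log k = \cO(\sqrt k\log k)$. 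Combining over all root guesses and collecting the polynomial terms yields the claimed bound $2^{\cO(\sqrt k\log k)} + n^{\cO(1)}$; one separates the $2^{\cO(\sqrt k\log k)}$ and $n^{\cO(1)}$ additively rather than multiplicatively by the standard trick that $2^{\cO(\sqrt k\log k)}\cdot n^{\cO(1)}$ is dominated by $2^{\cO(\sqrt k\log k)} + n^{\cO(1)}$ up to constants in the exponent.

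The main obstacle, such as it is, is purely bookkeeping: ensuring that the apex-minor-free hypothesis is genuinely needed and correctly propagated. The use of an \emph{apex} graph $H$ (rather than an arbitrary $H$) is essential precisely in Lemma~\ref{lem:internaltree1}, because the treewidth bound on each connected component of $G'$ relies on the theorem of Demaine and Hajiaghayi~\cite{DemaineH04} that bounded-diameter $H$-minor-free graphs have bounded treewidth, which holds only for apex $H$. I would make sure the statement of the theorem reflects this, and that the kernelization of Lemma~\ref{lem:iobkernel}, which deletes only vertices, preserves membership in $\mathscr G$ so that the diameter-treewidth argument applies to the kernelized digraph. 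Assembling these observations, the proof is essentially a one-paragraph corollary of Lemma~\ref{lem:internaltree1} together with bounded-treewidth dynamic programming, so I would write it as such.

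\begin{proof}
We first reduce \IOB{} to \rIOB{} by guessing the root $r\in V(D)$, which contributes a multiplicative factor of $n$. For a fixed root $r$, apply Lemma~\ref{lem:internaltree1} to $(D,k)$: since $UG(D)$ excludes the fixed apex graph $H$ as a minor, $UG(D)$ lies in the minor-closed class $\mathscr G$ of $H$-minor-free graphs, and the lemma yields in polynomial time a collection $\cal C$ of at most ${8k^2+6k\choose\sqrt k}= 2^{\cO(\sqrt k\log k)}$ instances $(D_i,k',r)$ with $k'\le k$, each satisfying $\tw(UG(D_i))=\cO(\sqrt k)$, such that $(D,k)$ has an out-branching with at least $k$ internal vertices if and only if some $(D_i,k',r)\in\cal C$ has an $r$-out-branching with at least $k'$ internal vertices. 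As the kernelization of Lemma~\ref{lem:iobkernel} used inside Lemma~\ref{lem:internaltree1} only deletes vertices, each $D_i$ has $\cO(k^2)$ vertices and $UG(D_i)\in\mathscr G$.

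For each instance $(D_i,k',r)\in\cal C$, use the constant-factor treewidth approximation algorithm of Demaine et al.~\cite{DemaineHK05} for $H$-minor-free graphs to compute, in time polynomial in $|V(D_i)|$, a tree decomposition of $UG(D_i)$ of width $w=\cO(\sqrt k)$. Then, by the dynamic programming procedure for \rIOB{} over graphs of bounded treewidth described in~\cite{abs-0801-1979}, we decide whether $(D_i,k',r)$ has an $r$-out-branching with at least $k'$ internal vertices in time $2^{\cO(w\log w)}|V(D_i)| = 2^{\cO(\sqrt k\log k)}\cdot\cO(k^2) = 2^{\cO(\sqrt k\log k)}$. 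Summing over the $2^{\cO(\sqrt k\log k)}$ instances in $\cal C$, the whole search phase runs in time $2^{\cO(\sqrt k\log k)}$, and the kernelization together with the treewidth approximation contributes an additive $n^{\cO(1)}$ term. Taking into account the outer factor $n$ from guessing the root, the total running time is $n\cdot 2^{\cO(\sqrt k\log k)} + n^{\cO(1)}$, which is $2^{\cO(\sqrt k\log k)} + n^{\cO(1)}$ since $n\cdot 2^{\cO(\sqrt k\log k)}$ is dominated by $2^{\cO(\sqrt k\log k)} + n^{\cO(1)}$ up to adjusting the constant in the exponent. This proves the theorem.
\end{proof}
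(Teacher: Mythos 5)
Your proof is essentially correct and follows the same approach as the paper: kernelize, apply Lemma~\ref{lem:internaltree1} to get a collection of $2^{\cO(\sqrt k\log k)}$ instances of treewidth $\cO(\sqrt k)$, solve each by dynamic programming, and charge the kernelization polynomially. You also correctly identify that the apex-minor-free hypothesis is used only through the diameter-treewidth theorem of Demaine and Hajiaghayi inside Lemma~\ref{lem:internaltree1}.

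One small inaccuracy in bookkeeping: you prepend an outer guess of the root $r$ over all $n$ vertices of $D$ before invoking Lemma~\ref{lem:internaltree1}, introducing a multiplicative factor of $n$. This is redundant. Lemma~\ref{lem:internaltree1} is stated for the \emph{unrooted} instance $(D,k)$ and already internalizes the root guess: its collection ${\cal C}=\cup_{r\in V(D')}{\cal C}_r$ ranges over all choices of root $r$ in the $\cO(k^2)$-vertex kernel $D'$, and its conclusion quantifies existentially over $r$. So the paper never incurs an $n$-fold outer loop; the only $n^{\cO(1)}$ term comes from the kernelization. You then patch the redundant factor by invoking the standard ``$f(k)\cdot n^{\cO(1)} \le f(k)^{\cO(1)} + n^{\cO(1)}$'' trick, which is a valid argument and yields the same final bound, but it is an extra step that the paper avoids by reading the lemma as stated. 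Conceptually the two routes coincide; yours is just slightly less economical in stating where the root enumeration lives.
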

\begin{proof}
As the first step of the algorithm we apply Lemma~\ref{lem:internaltree1} and obtain collection $\cal C$ such that for every $(D,k,r)\in \cal C$, $\tw(UG(D))\in \cO(\sqrt k)$.  
Then using the constant factor approximation algorithm of Demaine et al.~\cite{DemaineHK05} for computing the treewidth of a $H$-minor free graph,  
we find a tree decomposition  of width $\cO(\sqrt k)$ for $UG(D)$ in time $k^{\cO(1)}$. Finally, we apply dynamic programming algorithm 
 running in time $(\sqrt k)^{\cO(\sqrt k)}= 2^{\cO(\sqrt k\log k)}$ on each instance in $\cal C$. If for any of them we get an 
yes answer we return ``yes'', else we return ``no''.  The running time of the algorithm is bounded by 
$$|{\cal C}|\cdot  2^{\cO(\sqrt k\log k)}+n^{\cO(1)}= 2^{\cO(\sqrt k\log k)}\cdot 2^{\cO(\sqrt k\log k)}+n^{\cO(1)} =
2^{\cO(\sqrt k\log k)} +n^{\cO(1)}. $$
We have an additive term of $n^{\cO(1)}$ as we apply the algorithm only on the $\cO(k^2)$ size kernel. This completes the proof. 
\end{proof}

\section{Conclusion and Discussions}
We have given the first subexponential parameterized algorithms on planar digraphs and on the class of digraphs whose underlying undirected graph excludes a fixed graph $H$ 
or an apex graph as a minor. We have outlined two general techniques, and have illustrated them on two well studied problems concerning oriented spanning trees (out branching)--- one that maximizes the number of leaves and the other that maximizes the number of internal vertices. One of our techniques uses the grid theorem on
$H$-minor graphs, albeit in a different way than how it is used on undirected graphs. The other uses Baker type layering technique combined with kernelization and solves the problem on
a subexponential number of problems whose instances have sublinear treewidth. 

We believe that our techniques will be widely applicable and it would be interesting to find other problems where such subexponential algorithms are possible.  
Two famous open problems in this context are whether the {\sc $k$-Directed Path} problem (does a digraph contains a directed path of length at least $k$) and the {\sc $k$-Directed Feedback Vertex Set} problem
(does a digraph can be turned into acyclic digraph by removing at most $k$ vertices)
have subexponential algorithms (at least) on planar digraphs.
However, for the  {\sc $k$-Directed Path} problem, we can reach ``almost" subexponential  running time. More precisely, we have the following theorem. 

\begin{theorem}\label{thm:k-path}$[\star]$
For any $\varepsilon>0$, there is $\delta $ such that the {\sc $k$-Directed Path} problem 
 is solvable in time $\cO((1+\ve)^k \cdot n^\delta)$
on digraphs with $n$ vertices such that the underlying undirected graph excludes a fixed apex graph $H$ as a minor. 
\end{theorem}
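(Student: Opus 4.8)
The plan is to combine the Baker-style layering technique already used in the proof of Lemma~\ref{lem:internaltree1} with the color-coding / dynamic-programming machinery for {\sc $k$-Path}. First I would observe that a directed path of length at least $k$ in $D$ is, in particular, a path of length at least $k$ in $UG(D)$; so every occurrence of such a path lives inside a window of at most $k+1$ consecutive BFS layers of $UG(D)$ rooted at some vertex $v$. Fix $\ve>0$ and set $\ell=\ell(\ve)$ to be a sufficiently large constant multiple of $1/\ve$ (to be tuned at the end). Run BFS from an arbitrary vertex, obtaining layers $L_0,L_1,\dots,L_t$. Partition $\{0,\dots,t\}$ into residue classes modulo $\ell$; for each residue $q\in\{0,\dots,\ell-1\}$, deleting from $UG(D)$ all layers $L_j$ with $j\equiv q\pmod\ell$ splits the remaining graph into connected components, each contained in at most $\ell-1$ consecutive layers, hence of diameter $\cO(\ell)$, hence (by the Demaine--Hajiaghayi apex-minor diameter--treewidth bound, as invoked in the proof of Lemma~\ref{lem:internaltree1}) of treewidth $\cO(\ell)=\cO(1/\ve)$. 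On each such bounded-treewidth digraph we can solve {\sc $k$-Directed Path} exactly in time $2^{\cO(\ell)} n = f(\ve)\, n$ by standard dynamic programming over a tree decomposition (treating it as a directed problem, which only changes the table entries by constant factors).

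The issue, of course, is that a directed $k$-path that we seek may cross a deleted layer many times, so no single residue class $q$ preserves it. Here I would use the standard ``the path is short, so some residue class is cheap'' idea but in its multiplicative form: a directed path $P$ with $k$ vertices hits at most $k$ layers in total, so by averaging there is a residue $q$ such that the layers in class $q$ contain at most $k/\ell$ vertices of $P$. Deleting those layers breaks $P$ into at most $k/\ell + 1$ directed subpaths, each entirely contained in one bounded-treewidth component. So the algorithm, for each of the $\ell$ residues $q$ and each choice of which bounded-treewidth components receive which ``budget'' of vertices, must stitch together directed subpaths across components through the at most $k/\ell$ surviving deleted-layer vertices. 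To enumerate how the target path is distributed I would guess the (at most $k/\ell$) vertices of $P$ lying in the deleted layers together with the order in which $P$ visits them; this costs roughly $n^{k/\ell}$ choices, which is too much, so instead I would use \emph{color coding}: randomly color $V(D)$ with $k$ colors, ask for a colorful directed $k$-path, and note that within each bounded-treewidth component the colorful-path DP runs in time $2^{\cO(\ell)} 2^{k} n$, while the gluing across the $\le k/\ell$ deleted vertices only needs to track which color-sets have been used. The $2^k$ factor is exactly what we must avoid; so rather than plain color coding I would color with $\ceil{(1+\ve')k}$ colors for a small $\ve'$ and count subsets of size $k$, which pushes the per-component DP to $\binom{(1+\ve')k}{k} 2^{\cO(\ell)} n = (1+\ve'')^k\, n^{\cO(1)}$ by the usual divide-and-color / $\ve$-almost-perfect-hashing trick (as in the $k$-{\sc Path} literature and as hinted at by the last sentence of the introduction). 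Tuning $\ell$ large against $\ve'$ then yields overall time $\cO((1+\ve)^k n^{\delta})$ with $\delta=\delta(\ve)$ coming from the size of the perfect-hash family and the treewidth DP.

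Concretely the steps, in order, are: (1) fix $\ve$, choose $\ell$ and an auxiliary $\ve'$; (2) BFS-layer $UG(D)$ and, for each of the $\ell$ residue classes, delete that class of layers and record the $\cO(1/\ve)$-treewidth components; (3) on each component run the $(1+\ve')$-approximate-color-coding DP for directed colorful-ish $k$-paths, tracking at the boundary only the set (not order) of deleted-layer vertices used and the multiset of colors consumed; (4) combine the per-component partial paths through the $\le k/\ell$ surviving deleted-layer vertices by a small auxiliary DP over the at most $\ell\cdot(k/\ell)=k$ ``interface'' slots; (5) amplify the success probability, or derandomize via a standard $(\lceil(1+\ve')k\rceil, k)$-perfect hash family, incurring the polynomial factor $n^{\delta}$. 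Return ``yes'' iff some residue class and some budget split succeeds.

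The main obstacle I expect is step (4) together with making the bookkeeping in step (3) genuinely multiplicative rather than additive in $k$ in the exponent: naively, enumerating how the directed $k$-path is cut into $k/\ell$ pieces and which colors go to which piece reintroduces a $2^{\Theta(k)}$ or $n^{\Theta(k/\ell)}$ blow-up. The fix is to do the gluing inside the same dynamic program as the colorful-path search — process components in the BFS order and carry a single subproblem ``longest directed colorful path-segment ending at this interface vertex using this color-set'' — so that the total work is the product $\binom{(1+\ve')k}{k}\cdot 2^{\cO(\ell)}\cdot \mathrm{poly}(n)$ with only \emph{one} binomial factor overall, not one per piece. Verifying that this single-pass combination is correct for \emph{directed} paths (orientations restrict which interface vertices can serve as a segment's head versus tail, but otherwise the argument is unchanged) is the technical heart; everything else is the layering argument of Lemma~\ref{lem:internaltree1} and off-the-shelf $k$-{\sc Path} color-coding, so this is exactly why the statement is flagged $[\star]$ and deferred to the appendix.
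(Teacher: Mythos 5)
Your plan diverges from the paper's argument at the very first step, and the divergence is where the gap opens up. The paper does \emph{not} use BFS layering or color coding at all. Instead it observes that the vertex set of a directed $k$-path $P$ can be covered by at most $b$ balls of radius $k/b$ in the metric of $UG(D)$ (take every $(k/b)$-th vertex of $P$ as a center). It then guesses the $b$ centers at a cost of $\binom{n}{b}\le n^b$, forms the subgraph $F$ induced by the union of these balls, and proves via the grid--minor theorem for apex-minor-free graphs and a ball-packing bound that $\tw(F)=\cO(k/\sqrt{b})$. A single Catalan-structure dynamic program (\`a la Dorn--Fomin--Thilikos) over $F$ then finds a directed $k$-path in time $2^{\cO(k/\sqrt{b})}n^{\cO(1)}$, so the total is $2^{\cO(k/\sqrt{b})}n^{b+\cO(1)}$; choosing $b=(c/\log(1+\ve))^2$ gives $(1+\ve)^k n^{\delta}$. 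The crucial feature is that the entire path lives inside \emph{one} connected, small-treewidth subgraph, so there is no gluing problem whatsoever.

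Your Baker-style layering, by contrast, produces components of treewidth $\cO(1/\ve)$ that is independent of $k$, but the price is that the $k$-path is chopped into up to $k/\ell$ pieces scattered across components, and you must re-assemble them. The fix you propose does not close this gap. The colorful-path dynamic program intrinsically carries a \emph{subset} of the color palette as part of its state, so with $c=(1+\ve')k$ colors the state space is $\Theta(2^{c})\ge 2^{k}$ per vertex (or per separator signature), not $\binom{(1+\ve')k}{k}$; the latter controls the success probability of the coloring, not the cost of the table. There is no "usual divide-and-color / almost-perfect-hashing trick" that reduces $k$-{\sc Path} to $(1+\ve)^k n^{f(\ve)}$ on general bounded-treewidth graphs, and if there were, it would already give such a bound on arbitrary digraphs with no use of planarity. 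Separately, your step (4) is under-specified: after deleting a residue class, the surviving components are not linearly ordered in a way that the path respects (a directed path can dip in and out of a given component multiple times and visit deleted-layer vertices in an arbitrary interleaving), so a single pass "in BFS order" carrying one endpoint and one color-set does not capture all ways of stitching the pieces. Either of these issues alone breaks the claimed running time; together they mean the approach needs a fundamentally new idea, which is exactly what the ball-cover plus Catalan-DP argument in the paper supplies.
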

Let use remark that similar $\cO((1+\ve)^k n^{f(\ve)})$ results  can also be  obtained  for many other problems including \textsc{Planar Steiner Tree}. 
%\newpage
%\bibliographystyle{abbrv}
%\bibliographystyle{siam}
{\footnotesize{
%\bibliography{subexpD}

}}
\newpage

\section{Appendix}
\subsection{Proof of Theorem~\ref{lemma:planarLOB}}
\begin{proof}
We only give an outline of dynamic programming algorithm for planar digraphs that given a tree-decomposition of 
width $w$ decides whether $D$ has an out-branching with at least $k$ leaves in time $2^{\cO(w)}n$. The rest of the 
proof is same as Theorem~\ref{thm:LOB}. 

{\bf Tree collections.} 
%Let $G$ be an undirected graph with edge subset $E\subseteq E(G)$ and vertex subset $S = G[E] \cap G\setminus E$, that is $S$ is \emph{separating}  edge sets.
Let $G$ be an undirected graph with edge set $E(G)$ and let $E' \subseteq E(G)$. Let $S \subseteq V(G)$ be a vertex set separating $E'$ from $E(G) \setminus E'$, that is, $S$ contains all vertices incident to at least one edge of $E$ and at least one edge of $E(G) \setminus E'$.
 We consider a forest $\mathcal{F}$ with disjoint trees on edges of $E'$ and each intersecting at least one vertex of $S$.
 Let us denote the collection of all such forests $\mathcal{F}$ by $\text{\bf forests}_{E'}(S)$. 
 %If $\text{\bf forrests}_E(S)$ 
 % only consists of forrests with trees with maximum degree two and nodes intersecting with $S$ being exactly the degree one nodes, we may name it $\text{\bf paths}_E(S)$.
 
  We define an equivalence relation $\sim$ on $\text{\bf forests}_{E'}(S)$ as: for two forests $\mathcal{F}_1, \mathcal{F}_2 \in \mathcal{F}$, $\mathcal{F}_1 \sim \mathcal{F}_2$ if there is a bijection $\mu: \mathcal{F}_1 \rightarrow \mathcal{F}_2$ such that for every tree $T \in {\bf F}$ we have that $ T \cap S = \mu(T) \cap S$. Let $\text{\bf q-forests}(S)$ denote the cardinality of both, the quotient set of  $\text{\bf forests}_{E'}(S)$  plus the quotient set of  $\text{\bf forests}_{E'\setminus E(G)}(S)$ by relation $\sim$.
 In general, $\text{\bf q-forests}(S) \leq |S|!$.
In~\cite{DornPBF09}, the authors show for a planar graph $G$  of treewidth $w$ how to  decompose $G$ by separators of size $\cO(w)$, such that for each such separator $S$,   $\text{\bf q-forests}(S)$ is bounded by $2^{\cO(w)}$. These \emph{branch decompositions} are very closely related to tree decompositions with width parameters bounding each other by constants. Thus, we can simply talk about tree decompositions with some additional structure. 

In this case we use standard 
dynamic programming on such tree decompositions $(X,T)$ (see e.g. \cite{FominT06})
% \cite{CookS03}).
 At every step of dynamic programming for each node of $T$ , we  keep 
track of all the ways the required out-branching can cross the separator $S$ represented by $X$. In other words, we 
count all the ways parts of the out-branching can be routed 
through $E$.
 In the underlying undirected graph,  this is proportional 
to $\text{\bf q-forests}(S)$. 
Since an out-branching is rooted, every subtree is rooted, too. Thus, the only overhead in the directed case compared to the undirected is that we have to guess for each tree $T_F$ in $\mathcal{F}$ if its root is in $S$. In this case, we guess which of the vertices of $T_F \cap S$ is the root. The number of guesses is bounded by $2^{\cO(w)}$ and hence the 
dynamic programming algorithm runs in time $\cO(2^{\cO(w)}n)$. 
\end{proof}

\subsection{Proof of Claim~\ref{claim:contract}}
\begin{proof}  Let the arc $xy$ disconnect at least two vertices $y$ and $w$ from $r$ and let $D'$ be the digraph obtained from $D$ by contracting the arc $xy$. Let $T$ be an $r$-out-branching of $D$ 
with at least $k$ leaves. Since every path from $r$ to $w$ contains the arc $xy$, $T$ contains $xy$ as well and neither $x$ nor $y$ is a leaf of $T$. Let $T'$ be the tree obtained from $T$ by contracting $xy$. 
$T'$ is an $r$-out-branching of $D'$ with at least $k$ leaves.

For the converse, let $T'$ be an $r$-out-branching of $D'$ with at least $k$ leaves. Let $x'$ be the vertex in $D'$ obtained by contracting the arc $xy$, and let $u$ be the parent of $x'$ in $T'$. 
Notice that the arc $ux'$ in $T'$ was initially the arc $ux$ before the contraction of $xy$, 
since there is no path from $r$ to $y$ avoiding $x$ in $D$. We obtain an $r$-out-branching $T$ of $D$ from $T'$,
by replacing the vertex $x'$ by the vertices $x$ and $y$ and adding the arcs $ux$, $xy$ and arc sets
$\{yz : x'z \in A(T') \wedge yz \in A(D)\}$ and
$\{xz : x'z \in A(T') \wedge yz \notin A(D)\}$.
All these arcs belong to $A(D)$ because all the out-neighbors of $x'$ in $D'$
are out-neighbors either of $x$ or of $y$ in $D$.
Finally, $x'$ must be an internal vertex of $T'$ since $x'$ disconnects $w$ from $r$.
Hence $T$ has at least as many leaves as $T'$.
\end{proof}
\subsection{Proof of Claim~\ref{claim:allarc}}
%\begin{claim}
%\label{claim:allarc}
%If $D'$ has an $r$-out-branching  $T'$ with at least $k$ leaves then $D'$ has an $r$-out-branching  $T$ with at least 
%$k$ leaves and containing all the arcs of $A_c$, that is, $A_c\subseteq A(T)$. Furthermore such an out-branching can be found in polynomial time. 
%\end{claim}
\begin{proof}
Let $T^*$ be an $r$-out-branching  of $D'$ with at least $k$ leaves and containing the maximum number of arcs from the set 
$A_c$. If $A_c\subseteq A(T^*)$, then we are through. So let us assume that there is an arc $e=xy\in A_c$ such that 
$e\notin A(T^*)$. Notice that since the vertices of ${\cal S}_{\geq 2}$ are cut vertices, they are always internal vertices in any out-branching rooted at $r$ in $D$. In particular, the vertices of ${\cal S}_{\geq 2}$ are internal vertices in $T^*$.  
Furthermore by Claim~\ref{claim:intersect} we know that $y$ is an end-point of exactly one arc in $A_c$.  Let $z$ be the 
parent of $y$ in $T^*$. Now obtain $T^*_{e}=T^*\setminus \{zy\} \cup \{xy\} $.  Observe that $T^*_{e}$ contains at least 
$k$ leaves and has more arcs from $A_c$ than $T^*$. This is contrary to our assumption that $T^*$ is an $r$-out-branching  of $D'$ with at least $k$ leaves and containing the maximum number of arcs from the set $A_c$. This proves 
that $D'$ has an $r$-out-branching  $T$ with at least $k$ leaves and containing all the arcs of $A_c$. 

Observe that starting from any $r$-out-branching  $T'$ of $D'$ we can obtain the desired $T$ in polynomial time 
by simple arc exchange operations described in the previous paragraph.  %This concludes the claim. 
\end{proof}
\subsection{Proof of Theorem~\ref{thm:k-path}}
%\begin{theorem}\label{thm:k-path}$[\star]$
%For any $\varepsilon>0$, there is $\delta $ such that the {\sc $k$-Directed Path} problem 
% is solvable in time $\cO((1+\ve)^k \cdot n^\delta)$
%on digraphs with $n$ vertices such that the underlying undirected graph excludes a fixed apex graph $H$ as a minor. 
%\end{theorem}
\begin{proof}
Let $P$ be a path of length $k$ in a digraph $D$.
The vertex set of $P$ can be covered by at most $b$ balls of radius  $k/b$
in the metric of $UG(D)$. Let $F$ be a subgraph of  $UG(D)$  induced by the vertices contained in $b$ balls of radius  $k/b$. We claim that there is a constant $c$ (depending only on the size of the apex graph $H$), such that 
$\tw(F) \leq c \cdot k/\sqrt{b}$. Indeed, because $F$  is apex minor-free, it contains a partially triangulated 
$(d\cdot \tw(F)\times d\cdot\tw(F))$-grid as a contraction for some $d>0$ \cite{F.V.Fomin:2009eu}.   One needs $\Omega ((\tw(F) b/k)^2 )$ balls of radius  $k/b$ to cover   such a grid, and hence to cover $F$ \cite{DemaineFHT05talg}. But on the other hand, $F$ is   covered by at most $b$ balls of radius  $k/b$, and the claim follows. 
By an easy adaptation of the algorithm from 
 \cite{DornFT08} for undirected $H$-minor-free graphs, it is possible to find in time 
 $2^{\cO(\tw(F)} \cdot n^{\cO(1)}$, if the subdigraph of $D$ with the underlying undirected graph $F$ contains 
a directed path of length $k$. Thus these computations can be done in time  $2^{c_H \cdot k/\sqrt{b}} \cdot n^{\cO(1)}$ for some constant $c_H>0$ depending only on the size of $H$. 

Putting things together,   to check if $D$ contains a path of length $k$ (and if yes, to construct such a path), we try all possible sets of $b$  vertices $B$ and for each such set we construct a graph $F$ induced by vertices at distance at most $k/b$ from vertices of $B$. If $D$ contain a $k$-path, then this path should be covered by at least one such set of $b$ balls. For each such graph, we check, if the corresponding directed subgraph contains a $k$-path.  The total running time of the algorithm is 
\[
\cO(\binom{n}{b} 2^{c\cdot k/\sqrt{b}} \cdot n^{c})=
\cO(\  2^{c\cdot k/\sqrt{b}} \cdot n^{b+c})
\]    
for some constant $c$. 
By putting $b=(c/(\log(1+\ve))^2$ and $\delta=b +c$, we complete the proof of the theorem.
\end{proof}

\end{document}